\documentclass[12pt]{amsart}
\usepackage{a4wide,hyperref} 

\newtheorem{theorem}{Theorem}
\newtheorem{lemma}[theorem]{Lemma}
\newtheorem{corollary}[theorem]{Corollary}
\newtheorem{proposition}[theorem]{Proposition}
\newtheorem{definition}[theorem]{Definition}
\newtheorem{remark}[theorem]{Remark}
\newtheorem{example}[theorem]{Example}
\newtheorem{notation}[theorem]{Notation}

\title{Balance properties of Arnoux-Rauzy words}

\author{Val\'erie Berth\'e}
\address{LIAFA, CNRS, Universit\'e Paris Diderot -- Paris 7, Case 7014, 75205 Paris Cedex 13, France}
\email{berthe@liafa.univ-paris-diderot.fr \vspace{-.4ex}}

\author{Julien Cassaigne}
\address{IML, CNRS, Campus de Luminy, Case 907, 13288 Marseille Cedex 9, France}
\email{cassaigne@iml.univ-mrs.fr \vspace{-.4ex}}

\author{Wolfgang Steiner}
\address{LIAFA, CNRS, Universit\'e Paris Diderot -- Paris 7, Case 7014, 75205 Paris Cedex 13, France}
\email{steiner@liafa.univ-paris-diderot.fr}

%\keywords{Infinite words; Arnoux-Rauzy words; balance properties; $S$-adic  sequences.}
%37B10, 68R15

\begin{document}
\begin{abstract}
The paper deals with balances and imbalances in Arnoux-Rauzy words. 
We provide sufficient conditions for $C$-balancedness, but our results indicate that even a characterization of $2$-balanced Arnoux-Rauzy words on a $3$-letter alphabet is not immediate. 
\end{abstract}
\maketitle

\section{Introduction}
Among infinite words with low factor complexity, Arnoux-Rauzy words, introduced in~\cite{AR91}, play a prominent role. 
Recall that factor complexity consists in counting the number of factors of a given length, and by low complexity we mean here that its growth is at most linear. 
A~seminal family of low factor complexity words is provided by the Sturmian words (see for instance~\cite{Lot02,Pyt02,BLRS09}), and by  their finite counterpart, namely Christoffel words~\cite{BR06,KR07,BLR08,PR11,BLRS09}. 
Sturmian words are defined in purely combinatorial terms, but their associated symbolic dynamical systems also have a very natural geometric description: they are natural codings of rotations of the circle.
Arnoux-Rauzy sequences were introduced in~\cite{AR91} in an attempt to find a class of words generalizing properties of Sturmian words.
Again, Arnoux-Rauzy words are defined in purely combinatorial terms (see Definition~\ref{def:ar} below), and they admit geometric representations as natural codings of $6$-interval exchange transformations on the circle. 
We quote~\cite{BLR08,BFPRR12} as illustrations of common behavior shared by Sturmian and Arnoux-Rauzy words concerning the relations between bifix codes and subgroups of the free group.

A~wide array of literature is devoted to the study of the combinatorial, ergodic and geometric properties of Arnoux-Rauzy words, which belong to the family of episturmian words~\cite{GJ09}. 
The language of Arnoux-Rauzy words can be generated by iterating a finite set of substitutions; this property, which is called $S$-adic, will be the main viewpoint of our study; see Theorem~\ref{theo:ar} below. 
The $S$-adic generation is governed by a multidimensional continued fraction algorithm of simultaneous approximation described and studied for instance in~\cite{AR91,CMR99,CC06,CFM08}.  

Despite the fact that they were introduced as generalizations of Sturmian words, Arnoux-Rauzy words display a much more complex behavior, which is not yet fully understood; this is highlighted in~\cite{CFM08}. 
Some of the properties shared by purely substitutive Arnoux-Rauzy words   sustain the similarity with Sturmian words. 
For instance, Arnoux-Rauzy substitutions are known to be Pisot~\cite{AI01} and thus to generate finitely balanced Arnoux-Rauzy words.
Here, finitely balanced means that they are $C$-balanced for some $C > 0$; see Definition~\ref{d:balanced} below.
This is the case of the Tribonacci word, which is the Arnoux-Rauzy word fixed by the substitution $1\mapsto 12,\ 2 \mapsto 13,\ 3 \mapsto 1$; see for instance~\cite{RSZ10}. 
Recall that Sturmian words are $1$-balanced. 
Furthermore, purely substitutive Arnoux-Rauzy words are natural codings of toral translations; see~\cite{BJS12} for a proof.
More generally, all Arnoux-Rauzy words were expected to be natural codings of toral translations, a property that implies finite balance; see the discussions in~\cite{CFZ00,BFZ05,CFM08} for more details. 
The first striking fact contradicting this idea came from~\cite{CFZ00}, where examples of Arnoux-Rauzy sequences that are not finitely balanced were constructed.

The aim of the present  paper is to try to understand balances and imbalances in Arnoux-Rauzy words, continuing the study performed in~\cite{CFZ00}. 
We provide sufficient conditions that guarantee finite balance. 
In particular, we prove that bounded partial quotients in the $S$-adic expansion imply finite balance (Theorem~\ref{t:1}), but we also exhibit examples of $2$-balanced Arnoux-Rauzy words with unbounded partial quotients (Corollary~\ref{cor:2}). 
Our results show that a characterization of $2$-balanced Arnoux-Rauzy words on a $3$-letter alphabet is not at hand. 
Indeed, the occurrence of certain  patterns as prefixes of the $S$-adic expansion prevent $2$-balance (see Theorem~\ref{t:3}), but the image by an Arnoux-Rauzy substitution of an Arnoux-Rauzy word that is not $2$-balanced can be a $2$-balanced sequence. 

Note that balance properties of Arnoux-Rauzy words, and more generally of episturmian words, have also been investigated in view of Fraenkel's conjecture; see for instance~\cite{PV07} and see also the survey on balanced words~\cite{Vui03} for more on this conjecture.
Moreover, if superimposition properties of Christoffel words, such as considered in~\cite{PR11}, are now well understood, much remains to be done in this direction for alphabets with more than $2$~letters and for epiChristoffel words such as developped in~\cite{Paq09}. 

The present paper is organized as follows. 
After recalling the required definitions in Section~\ref{sec:def},  we state the main results of the paper in Section~\ref{sec:res}. 
Proofs  are given in Section~\ref{sec:proofs}, with Section~\ref{subsec:balance} devoted to balance results, while Section~\ref{subsec:unbalance} shows how to create imbalances.

\section{Definitions}\label{sec:def}

Let $\mathcal{A} = \{1,2,\ldots,d\}$ be a finite alphabet.  
A~\emph{substitution} $\sigma$ over the alphabet~$\mathcal{A}$ is an endomorphism of the free monoid~$\mathcal{A}^*$. 
For any word~$w$ in the free monoid~$\mathcal{A}^*$ (endowed with the concatenation as operation), $|w|$~denotes the length of~$w$, and $|w|_j$ stands for the number of occurrences of the letter~$j$ in the word~$w$.
A~\emph{factor} of a (finite or infinite) word~$\omega$ is defined as the concatenation of consecutive letters occurring in~$\omega$. 
A~factor~$w$ of~$\omega$ is said to be right special (resp.\ left special) if there exist at least two distinct letters $a,b$ of the alphabet~$\mathcal{A}$ such that $wa$ and~$wb$  (resp.\ $aw$ and~$bw$) are factors of~$\omega$. 

\begin{definition}[Arnoux-Rauzy words]  \label{def:ar} 
Let $\mathcal{A} = \{1,2,\ldots,d\}$. 
An infinite word $\omega \in \mathcal{A}^\mathbb{N}$ is an \emph{Arnoux-Rauzy word} if all its factors occur infinitely often, and if for all~$n$ it has $(d-1)n+1$ factors of length~$n$, with exactly one left special and one right special factor of length~$n$.
\end{definition}

One checks that Arnoux-Rauzy words are uniformly recurrent, that is, all factors occur with bounded gaps. 
The symbolic dynamical system $(X_{\omega},T)$ associated with a given Arnoux-Rauzy word~$\omega$ is defined as the set~$X_{\omega}$ of infinite words that have the same language as~$\omega$, endowed with the shift map~$T$ that sends an infinite word onto the infinite word from which the first letter has been removed.

In order to work with Arnoux-Rauzy words, we will use their representations as $S$-adic words according to the terminology of~\cite{Dur00,Pyt02}.
We first define the set~$\mathcal{S}$ of Arnoux-Rauzy substitutions as $\mathcal{S} = \{\sigma_i:\, i \in \mathcal{A}\}$, with
$$
\sigma_i:\ i \mapsto i,\ j \mapsto ji\ \mbox{for}\ j \in \mathcal{A} \setminus \{i\}\,.
$$

\begin{theorem}[\cite{AR91}] \label{theo:ar}
An infinite word $\omega \in \mathcal{A}^\mathbb{N}$ is an Arnoux-Rauzy word if and only if its set of factors coincides with the set of factors of a sequence of the form 
$$
\lim_{n\to\infty} \sigma_{i_0} \sigma_{i_1} \cdots \sigma_{i_n} (1), 
$$
where the sequence $(i_n)_{n\geq 0} \in \mathcal{A}^\mathbb{N}$ is such that every letter in~$\mathcal{A}$ occurs infinitely often in~$(i_n)_{n\ge0}$. 
Furthermore, such a sequence $(i_n)_{n\geq 0} \in \mathcal{A}^\mathbb{N}$ is uniquely defined for a given~$\omega$.

For any given Arnoux-Rauzy word, the sequence $(i_n)_{n \geq 0}$ is called the $\mathcal{S}$-\emph{directive word} of~$\omega$. 
All the Arnoux-Rauzy words that belong to the dynamical system $(X_{\omega},T)$ have the same $\mathcal{S}$-directive word.
\end{theorem}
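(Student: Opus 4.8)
The plan is to prove both implications through a single mechanism: a desubstitution that, for an Arnoux-Rauzy word $\omega$ (Definition~\ref{def:ar}), identifies a first directive letter $i_0$ and exhibits $\omega$ as the image $\sigma_{i_0}(\omega^{(1)})$ of another Arnoux-Rauzy word, which can then be iterated. Throughout I use two auxiliary facts, to be established first, namely that the language of an Arnoux-Rauzy word is closed under reversal and that its right special factors are nested: since any suffix of a right special factor is again right special, uniqueness forces the right special factor $R_n$ of length $n$ to be a suffix of $R_{n+1}$, so the $R_n$ converge to a one-sided word whose first letter $r := R_1$ is the unique right special letter, that is, the single letter admitting the $d$ distinct right extensions forced by the fact that the number of factors of length $n+1$ exceeds that of length $n$ by $d-1$.

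For the ``if'' direction I would first check convergence: as $\sigma_i(1)$ begins with $1$ for every $i$, the words $\sigma_{i_0}\cdots\sigma_{i_n}(1)$ form a nested sequence of prefixes, and the hypothesis that every letter occurs infinitely often in $(i_n)$ forces their lengths to tend to infinity (a routine verification), so the limit $\omega$ exists. It then remains to see that $\sigma_i$ sends Arnoux-Rauzy languages to Arnoux-Rauzy languages with the directive letter $i$ prepended. The elementary observation here is that in any image $\sigma_i(u)$ every letter $j\neq i$ is immediately followed by $i$, since such a $j$ occurs only inside the block $ji$, so $i$ becomes the unique right special letter while every other letter has unique right extension $i$; tracking how factors of $\sigma_i(\omega^{(1)})$ decompose into blocks $i$ and $ji$ then lets one transport the complexity $(d-1)n+1$ and the uniqueness of the special factors from $\omega^{(1)}$ to its image, and passing to the limit yields that the factor set of $\omega$ is that of an Arnoux-Rauzy word.

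For the ``only if'' direction the key step is the desubstitution. Using reversal closure, the predecessor set of each letter equals its successor set, because $xj$ is a factor exactly when its reversal $jx$ is; since $r$ is then also the unique left special letter, every $j\neq r$ has a single successor, and if that successor were some $k\neq r$ then $j$ would occur only in the context $kjk$ and, arguing the same way for $k$, the two letters would propagate into an eventually periodic block $(jk)^{\infty}$, contradicting that $r$ occurs infinitely often. Hence every letter $j\neq r$ is always followed by $r$, so $\omega$ parses uniquely from the left into blocks $r$ and $jr$ (with $j\neq r$), giving $\omega=\sigma_{r}(\omega^{(1)})$ with $\omega^{(1)}\in\mathcal{A}^\mathbb{N}$; one checks that $\omega^{(1)}$ is again Arnoux-Rauzy. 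Setting $i_0:=r$ and iterating produces the directive word $(i_n)_{n\ge0}$, with $i_n$ the right special letter of the $n$-th desubstitution $\omega^{(n)}$.

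The main obstacle I expect is the bookkeeping that makes the iteration legitimate, namely showing that every letter of $\mathcal{A}$ occurs infinitely often in the resulting $(i_n)$: if some letter $a$ were right special only finitely often it would eventually be a non-special letter in every $\omega^{(n)}$, and one must rule out that this causes $a$ to disappear or the word to cease being genuinely $d$-ary. This is precisely the point that distinguishes Arnoux-Rauzy words on the full alphabet from episturmian words over a subalphabet, and it has to be argued from uniform recurrence. Finally, uniqueness of the directive word is immediate from the construction: $i_0$ is recovered from the factor set of $\omega$ as its unique right special letter, the factor set of $\omega^{(1)}$ is determined by that of $\omega$ through the block decomposition, and an induction shows that the whole sequence $(i_n)$ is a function of the factor set of $\omega$; since any two Arnoux-Rauzy words with the same language desubstitute identically, they share the directive word, which is the last assertion of the statement.
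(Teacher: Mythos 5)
The paper offers no proof of Theorem~\ref{theo:ar}: it is imported wholesale from \cite{AR91} and used as a black box, so there is nothing in-house to compare you against. Your overall architecture --- identify the unique right special letter, show every other letter is always followed by it, desubstitute $\omega=\sigma_{i_0}(\omega^{(1)})$, iterate, and get uniqueness because $i_0$ is a function of the language --- is the classical route and is the right one.

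The genuine gap is your first ``auxiliary fact''. Closure of the language under reversal is not a routine preliminary: starting from Definition~\ref{def:ar} it is itself a nontrivial theorem, and the usual proofs (via palindromic bispecial factors, or via the episturmian formalism of Justin--Pirillo) are normally obtained \emph{from} the $S$-adic representation you are in the middle of establishing, so invoking it here is circular unless you supply an independent argument. Fortunately you do not need it: the one consequence you use (every letter $j\neq r$ has successor set exactly $\{r\}$) follows from a degree count in the order-$1$ Rauzy graph. There are $d$ vertices and $(d-1)\cdot2+1=2d-1$ edges; every vertex has out-degree at least $1$ and only $r$ has out-degree at least $2$, so $r$ has out-degree exactly $d$ and hence precedes every letter; dually the unique left special letter $l$ has in-degree $d$ and hence follows every letter. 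Thus every $j\neq r$ has unique successor $l$; if $l\neq r$ this applies to $j=l$ and forces the word to end in $l^\infty$, contradicting that $r$ occurs infinitely often. So $l=r$ and the parsing into blocks $r$ and $jr$ is immediate, with no reversal lemma and no $(jk)^\infty$ case analysis. Two further steps deserve more than ``one checks'': that $\omega^{(1)}$ is again Arnoux-Rauzy, and the final identification of the language of $\omega$ with that of $\lim_{n}\sigma_{i_0}\cdots\sigma_{i_n}(1)$ --- for the latter, note that each $\sigma_{i_0}\cdots\sigma_{i_n}(1)$ is a factor of $\omega$ (since $1$ occurs in $\omega^{(n+1)}$), so one language contains the other, and equality follows because both contain exactly $(d-1)\ell+1$ words of each length $\ell$. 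Your flagged obstacle does go through: if $i_n\neq a$ for all $n\geq N$, consecutive occurrences of $a$ in $\omega^{(N)}$ are separated by at least $|\sigma_{i_N}\cdots\sigma_{i_n}(a)|$, which tends to infinity, so $a$ occurs at most once in $\omega^{(N)}$, contradicting Definition~\ref{def:ar}.
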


\begin{remark}
If one takes the set $\mathcal{S}' = \{\sigma'_i:\, i \in \mathcal{A}\}$, with
$$
\sigma'_i:\ i \mapsto i,\ j \mapsto ij\ \mbox{for}\ j \in \mathcal{A} \setminus \{i\}
$$
then Arnoux-Rauzy words of the form  
$$
\lim_{n\to\infty} \sigma'_{i_0} \sigma'_{i_1} \cdots \sigma'_{i_n} (1)
$$
are called standard or characteristic Arnoux-Rauzy words. 
We  choose to work here with the set~$\mathcal{S}$ that proves to be more convenient for handling balance results.
\end{remark}

The following definition is inspired by the fact that Sturmian words also admit $S$-adic representations governed by the usual continued fraction algorithm; 
see~\cite{Lot02,Pyt02} for details.

\begin{definition}[Weak and strong partial quotients]
Let $\omega$ be an  Arnoux-Rauzy word with $\mathcal{S}$-directive sequence $(i_m)_{m \ge 0}$.
Write 
$$
i_0 i_1 i_2 \cdots = j_0^{k_0} j_1^{k_1} j_2^{k_2} \cdots,
$$
where $j_n \in \mathcal{A}$, $k_n \geq 1$, and $ j_n \neq j_{n +1}$ for all $n \ge 0$.
The powers~$k_n$ are called \emph{weak partial quotients}.

Let $(n_{\ell})_{\ell \ge 0}$ be the increasing sequence of integers satisfying $n_0 = 0$,
$$
\{i_{n_\ell}, i_{n_\ell+1}, \ldots,  i_{n_{\ell+1}}\} = \mathcal{A} \quad \mbox{and} \quad \{i_{n_\ell}, i_{n_\ell+1}, \ldots,  i_{n_{\ell+1}-1}\} \neq \mathcal{A}
$$
for all $\ell \geq 0$.
The quantity $(n_{\ell+1} - n_\ell)$ is called \emph{strong partial quotient}.
\end{definition}

\begin{notation}
Let $\omega$ be an Arnoux-Rauzy word with $\mathcal{S}$-directive sequence $(i_m)_{m \geq 0}$.
For every integer $m \ge 0$, we define~$\omega^{(m)}$ as the following Arnoux-Rauzy word with $\mathcal{S}$-directive sequence $(i_n)_{ n\ge m}$:
$$
\omega^{(m)} = \lim_{n\to\infty} \sigma_{i_m} \sigma_{i_{m+1}} \cdots \sigma_{i_n} (1).
$$
\end{notation}

Lastly, we introduce the notion of finite balance.

\begin{definition}[$C$-balance] \label{d:balanced}
Let $C \in \mathbb{N}$. 
A~pair of words $u, v \in \mathcal{A}^*$ with $|u| = |v|$ is \emph{$C$-balanced} if 
$$
-C \le |u|_j - |v|_j \le C \quad \mbox{for all}\ j \in \mathcal{A}\,.
$$
A~word $\omega \in \mathcal{A}^\mathbb{N}$ is \emph{$C$-balanced} if all pairs of factors $u, v$ of $\omega$ with $|u| = |v|$ are $C$-balanced.
A~word $\omega \in \mathcal{A}^\mathbb{N}$ is said to be \emph{finitely balanced} if there exists $C \in \mathbb{N}$ such that it is $C$-balanced.
\end{definition}

\section{Results}\label{sec:res}

We now restrict ourselves to $3$-letter Arnoux-Rauzy words (i.e., $d=3$)  for technical combinatorial reasons.

Note that Sturmian sequences are all $1$-balanced (and thus finitely balanced) words. 
This provides even a characterization of Sturmian words: Sturmian words are exactly the binary $1$-balanced words that are not eventually periodic. 
Three-letter words that are $1$-balanced are completely characterized in~\cite{Hub00}; in particular, Arnoux-Rauzy words cannot be $1$-balanced, see also~\cite{PV07}. 
The so-called Tribonacci word, which is the Arnoux-Rauzy word fixed by the substitution $1\mapsto 12,\ 2 \mapsto 13,\ 3 \mapsto 1$, with $\mathcal{S}$-directive sequence $123123\cdots$, has been proved to be $2$-balanced in~\cite{RSZ10}. 
Nevertheless, Arnoux-Rauzy words are now known to be in general far from being $2$-balanced.
They need not even be finitely balanced~\cite{CFZ00,CFM08}. 

A~natural question is to understand whether the fact that partial quotients are bounded can guarantee (and even characterize) finite balancedness.
We have seen that there are two types of partial quotients.
The following theorems provide sufficient conditions for $({2h\!+\!1})$-balance and $2$-balance respectively.

\begin{theorem} \label{t:1}
Let $\omega$ be an Arnoux-Rauzy word with $d = 3$ and $\mathcal{S}$-directive sequence $(i_m)_{m \geq 0}$.
If the weak partial quotients are bounded by~$h$, i.e., if we do not have $i_m = i_{m+1} = \cdots = i_{m+h}$ for any $m \ge 0$, then $\omega$ is $({2h\!+\!1})$-balanced.
\end{theorem}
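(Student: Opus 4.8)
The plan is to analyse how a pair of factors of the same length \emph{desubstitutes} through the directive sequence, and to show that the only imbalance which can survive is a discrepancy of \emph{lengths} at a deeper level, a discrepancy which the weak partial quotients keep under control. The starting point is the recognizability of~$\sigma_{i_m}$: since $\sigma_{i_m}(i_m)=i_m$ and $\sigma_{i_m}(a)=a\,i_m$ for $a\neq i_m$, every block $\sigma_{i_m}(a)$ ends with the letter~$i_m$ and contains exactly one occurrence of it. Hence the occurrences of~$i_m$ in $\omega^{(m)}=\sigma_{i_m}(\omega^{(m+1)})$ mark the block boundaries, and any factor $u$ of~$\omega^{(m)}$ can be written uniquely as $u=p_m\,\sigma_{i_m}(u')\,s_m$ with $u'$ a factor of~$\omega^{(m+1)}$, $p_m\in\{\varepsilon,i_m\}$ and $s_m\in\{\varepsilon\}\cup(\mathcal A\setminus\{i_m\})$. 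Applying this to $u$ and to a second factor~$v$ of the same length, and iterating, produces sequences $u=u_0,u_1,\dots$ and $v=v_0,v_1,\dots$ with $u_m=p_m\,\sigma_{i_m}(u_{m+1})\,s_m$ and $v_m=q_m\,\sigma_{i_m}(v_{m+1})\,t_m$; since lengths do not increase and $\sigma_{i_0}\cdots\sigma_{i_{n-1}}$ is expanding, $u_n=v_n=\varepsilon$ for $n$ large.

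Next I would pass to abelianizations. Writing $\Delta_m$ for the vector with entries $|u_m|_j-|v_m|_j$ and $\lambda_m=|u_m|-|v_m|$, the incidence matrix of~$\sigma_{i_m}$ fixes every coordinate $j\neq i_m$ and sends the coordinate~$i_m$ to the total length. This gives, for each~$m$,
\[(\Delta_m)_j=(\Delta_{m+1})_j+\delta_{m,j}\quad(j\neq i_m),\qquad (\Delta_m)_{i_m}=\lambda_{m+1}+\delta_{m,i_m},\]
where each error $\delta_{m,j}\in\{-1,0,1\}$ records whether the boundary letters $p_m,s_m,q_m,t_m$ equal~$j$. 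Fixing a target letter~$j$ and letting $m^{*}$ be the first index with $i_{m^{*}}=j$, telescoping the first relation for $0\le m<m^{*}$ and then applying the second relation at~$m^{*}$ yields
\[|u|_j-|v|_j=\lambda_{m^{*}+1}+\sum_{m=0}^{m^{*}}\delta_{m,j}.\]

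The error sum is controlled by a cancellation that is the clean half of the argument: because every block ends with its special letter, if $s_m=j$ with $j\neq i_m$ then the right endpoint of~$u_{m+1}$ sits on a block boundary at the next level (the occurrence of~$j$ begins its block $j\,i_{m+1}$), forcing all later right-remainders to be empty until~$j$ becomes special. Hence, along any stretch of indices avoiding $i_m=j$, at most one right-remainder equals~$j$, and likewise for~$v$; together with the single left-remainder term at~$m^{*}$ this gives $\big|\sum_{m=0}^{m^{*}}\delta_{m,j}\big|\le 2$. It therefore remains to bound the length discrepancy~$\lambda_{m^{*}+1}$, and this is where I expect the real difficulty to lie. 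The recursions above combine into $\lambda_m=2\lambda_{m+1}-(\Delta_{m+1})_{i_m}+O(1)$, so the special-coordinate imbalances and the length discrepancies are coupled and a naive induction closes only up to additive constants. The hypothesis must enter through the \emph{left} remainders~$p_m$: in contrast with the right ones, a nonempty left remainder can persist, but only while the special letter stays constant, i.e.\ for at most the length of a run in the directive sequence. Since the weak partial quotients are bounded by~$h$, every such run has length at most~$h$, and the goal is to turn a run-by-run accounting of how many complete blocks a window of fixed length can gain or lose into the bound $|\lambda_{m^{*}+1}|\le 2h-1$, which combined with the error estimate yields $\big||u|_j-|v|_j\big|\le 2h+1$ for every~$j$. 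The main obstacle is thus the precise run-by-run control of~$\lambda$ governed by~$h$; the reduction to it and the right-remainder cancellation are the parts I expect to be routine.
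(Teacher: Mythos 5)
Your bookkeeping for the desubstitution is sound: the exact preservation of the counts of the letters $j\neq i_m$ up to boundary remainders, and the observation that a trailing remainder equal to~$j$ forces all subsequent right-remainders to be empty until $j$ becomes the special letter, are both correct, and they correspond to what Lemma~\ref{l:preimage} records in the paper (there the normalization $i_m^\delta\,\sigma_{i_m}(u')=u\,i_m^\varepsilon$ makes $|u'|_j=|u|_j$ hold \emph{exactly} for $j\neq i_m$, so no error sum needs to be controlled at all). The genuine gap is the step you yourself flag as ``the real difficulty'': the bound $|\lambda_{m^{*}+1}|\le 2h-1$ on the length discrepancy is never proved, and it cannot be obtained by the local run-by-run accounting you sketch. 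Indeed, your own telescoped identity shows that this bound, uniformly over all pairs, is essentially equivalent to the theorem itself (it differs from $(2h\!+\!1)$-balancedness only by the additive error~$2$), and the recursion $\lambda_m=2\lambda_{m+1}-(\Delta_{m+1})_{i_m}+O(1)$ couples the length discrepancies to the letter imbalances at deeper levels, which are exactly the quantities one is trying to bound; so the reduction is circular rather than a reduction.

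The paper breaks this circularity with a global minimality argument: one assumes a pair $u,v$ of equal length with $|u|_j-|v|_j=2h+2$, chosen of minimal length over \emph{all} the words $\omega^{(m)}$ simultaneously, so that every shorter equal-length pair at every level is already $(2h\!+\!1)$-balanced. Two conversion lemmas (Lemmas~\ref{l:unbalanced1} and~\ref{l:unbalanced2}) turn a large Parikh difference of an unequal-length pair into an equal-length unbalanced pair of smaller length, contradicting minimality; this is what pins the difference vector down to $(-1,\,2h+2,\,-2h-1)$ on the letters $(i_0,j,k)$, forces $i_n=j$ after the initial run of $i_0$'s of length $n\le h$ (this is where the hypothesis enters), and shows that the length difference then grows by at least $h$ per step inside the run of $j$'s until Lemma~\ref{l:unbalanced2} (using that $j$ cannot occur twice consecutively in $\sigma_i(w)$) yields the final contradiction. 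Without a device of this kind your plan does not close: as it stands the proposal establishes the setup and the error estimate, but not the theorem.
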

 
\begin{theorem} \label{t:2}
Let $\omega$ be an Arnoux-Rauzy word with $d = 3$ and $\mathcal{S}$-directive sequence $(i_m)_{m \geq 0}$.
If $i_{m+2} \not\in \{i_m, i_{m+1}\}$ 
\begin{enumerate}
\item
for all $m \ge 1$ such that $i_{m-1} = i_m \ne i_{m+1}$  and
\item  
for all $m \ge 2$ such that $i_{m-2} = i_m \ne i_{m+1} = i_{m-1}$,
\end{enumerate}
then $\omega$ is $2$-balanced.
\end{theorem}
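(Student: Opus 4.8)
The plan is to bound, for every pair of factors $u,v$ of $\omega$ with $|u|=|v|$, each coordinate of the difference vector $\delta=(|u|_1-|v|_1,\,|u|_2-|v|_2,\,|u|_3-|v|_3)$, and to show it stays in $\{-2,-1,0,1,2\}$. The engine is desubstitution along the tower $\omega^{(m)}=\sigma_{i_m}(\omega^{(m+1)})$. First I would prove the expected parsing lemma: every long enough factor $w$ of $\omega^{(m)}$ factors uniquely as $w=p\,\sigma_{i_m}(w')\,s$ with $w'$ a factor of $\omega^{(m+1)}$, where $p$ is a proper suffix and $s$ a proper prefix of blocks $\sigma_{i_m}(a)$. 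Since each block has length at most $2$ and \emph{ends} in $i_m$, this forces $p\in\{\varepsilon,i_m\}$ and $s\in\{\varepsilon\}\cup\{j:j\neq i_m\}$, so each boundary piece carries at most one letter.

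Applying this to a pair $u,v$ at level $m$ gives $u=p_u\,\sigma_{i_m}(u')\,s_u$ and $v=p_v\,\sigma_{i_m}(v')\,s_v$. The key abelianisation fact is that in $\sigma_{i_m}(w')$ the letter $i_m$ occurs exactly $|w'|$ times, while each $j\neq i_m$ occurs $|w'|_j$ times. Combining this with the length constraint $|u|=|v|$ (which pins the length defect $|u'|-|v'|$ to $\tfrac12(\delta^{(m+1)}_{i_m}-B_m)$ for a boundary term $B_m\in\{-2,\dots,2\}$) yields a recursion of the shape
\[
\delta^{(m)}_{i_m}=\tfrac12\bigl(\delta^{(m+1)}_{i_m}-B_m\bigr)+\beta_m,\qquad \delta^{(m)}_j=\delta^{(m+1)}_j+\gamma_{m,j}\quad(j\neq i_m),
\]
where $\beta_m\in\{-1,0,1\}$ and $\gamma_{m,j}\in\{-1,0,1\}$ come from the prefixes and suffixes. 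The crucial structural feature is the factor $\tfrac12$ on the coordinate $i_m$: each time the directive letter equals $a$, coordinate $a$ is halved, whereas at every other level it changes only additively by a term in $\{-1,0,1\}$.

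Since $\omega$ is an Arnoux-Rauzy word, every letter recurs infinitely often in $(i_m)_m$, so each coordinate is halved infinitely often, and the incidence matrices over a block containing all three letters are expanding; hence desubstituting a fixed pair $u,v$ repeatedly shrinks their length and terminates at a pair of bounded length after finitely many steps $N$. Composing the recursion from level $N$ down to level $0$ then expresses $\delta^{(0)}_a$ as $2^{-R}\delta^{(N)}_a+\sum_{m<N}2^{-r_m}e_{m,a}$, where $R$ and $r_m$ count the occurrences of $a$ among $i_0,\dots,i_{N-1}$ (respectively before level $m$) and $e_{m,a}$ is the bounded boundary term at level $m$. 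The geometric weights $2^{-r_m}$ make this sum converge and already yield finite balance with a crude constant, recovering the flavour of Theorem~\ref{t:1}; the whole difficulty is to sharpen this to the bound $2$.

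The hard part, which is the technical heart of the argument, is to bound the boundary terms sharply rather than by their worst case $1$. Two phenomena must be controlled. In a long gap between consecutive occurrences of $a$ (a large strong partial quotient), one must show that the additive contributions $\gamma_{m,a}$ vanish except possibly at the ends of the gap, so that coordinate $a$ does not drift upward before its next halving; this is exactly what keeps $2$-balance compatible with unbounded partial quotients. At the ends of such runs, the admissible prefixes $p_u,p_v$ and suffixes $s_u,s_v$ — and hence $B_m,\beta_m,\gamma_{m,j}$ — are dictated by the left and right extensions of the factors, i.e.\ by the bispecial factors of $\omega^{(m)}$, whose desubstitution behaviour is governed by the forthcoming directive letters $i_{m+1},i_{m+2},\dots$. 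I would therefore enumerate the local patterns $i_{m-1}i_mi_{m+1}i_{m+2}$ and compute, in each case, the largest contribution a coordinate can receive before its next halving. The patterns singled out by hypotheses~(1) and~(2), namely $aab$ (a weak quotient followed by a switch) and the alternation $abab$, are precisely the configurations after which a freely chosen $i_{m+2}\in\{i_m,i_{m+1}\}$ would let a coordinate reach $3$ before being halved; requiring $i_{m+2}$ to be the third letter forces the extra halving (or kills the offending boundary letter) that confines every coordinate to $[-2,2]$. Verifying this finite case analysis, together with the bookkeeping of which factor actually realises a given imbalance, is where I expect the real work to lie.
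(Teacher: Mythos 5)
Your overall strategy (desubstitution along the tower $\omega^{(m)}$, tracking Parikh difference vectors, and identifying the patterns $aab\cdot$ and $abab\cdot$ excluded by hypotheses (1) and (2) as the dangerous configurations) is broadly aligned with the paper's, but there are two genuine problems. First, an intermediate claim is false: you assert that the geometric weights $2^{-r_m}$ already make the composed sum converge and ``yield finite balance with a crude constant.'' Under the hypotheses of the theorem the gap between consecutive occurrences of a letter $a$ in $(i_m)_{m\ge0}$ can be arbitrarily long (e.g.\ the directive sequences $1^{s_0}\,2\,3\,1^{s_1}\,2\,3\cdots$ of Corollary~\ref{cor:2}, where the gaps for the letters $2$ and $3$ are unbounded); all levels inside such a gap carry the \emph{same} weight $2^{-r}$, and each contributes a boundary term that your scheme bounds only by $1$, so the sum is not bounded a priori and your framework does not even give finite balance without further work. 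The paper sidesteps this entirely by running the recursion in the opposite direction: it starts from a putative unbalanced pair at level $0$ and desubstitutes \emph{upward}, where Lemma~\ref{l:preimage} gives $|u'|_j=|u|_j$ \emph{exactly} for $j\ne i_m$ --- a coordinate never drifts between its doublings when one follows a specific pair up the tower. The drift you must then laboriously kill is an artifact of your downward, worst-case-over-all-boundary-choices bookkeeping.

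Second, and more decisively, you defer the entire content of the proof. The paper's argument takes a \emph{minimal-length} equal-length pair with $|u|_j-|v|_j=3$, shows (reusing the analysis of Theorem~\ref{t:1}) that its Parikh difference must be exactly $(-1,3,-2)$ in the coordinates $(i_0,j,k)$, and then performs the case analysis on $i_n,i_{n+1},\dots$ --- using Lemmas~\ref{l:unbalanced1} and~\ref{l:unbalanced2} to convert unequal-length preimages back into equal-length pairs --- to produce a strictly shorter $3$-unbalanced pair in some $\omega^{(\ell)}$, contradicting minimality; the hypotheses (1) and (2) enter precisely to force $i_{n+1}=k$ or $i_{s+3}=k$ at the right moments. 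Your proposal names this finite case analysis as ``where I expect the real work to lie'' and does not carry it out, nor does it supply the mechanism (minimality plus the length-reequalization lemmas) that makes the case analysis close up. As written, this is a plausible plan with a flawed intermediate step, not a proof.
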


\begin{remark} \label{r:sft}
The conditions of Theorem~\ref{t:2} on the $\mathcal{S}$-directive sequence can also be stated in terms of a subshift of finite type: 
Let $X$ be the set of words $\{1121, 1122, 12121, 12122\}$ together with all the words that are obtained from one of these four words by a permutation of the letters~$1$, $2$, and~$3$. 
If $(i_m)_{m \geq 0}$   contains no factor in~$X$, then $\omega$ is $2$-balanced.
\end{remark}

For the proofs of Theorems~\ref{t:1} and~\ref{t:2}, which will be given in Section~\ref{subsec:balance}, we use methods that are typical for the $S$-adic framework: We work not only with the word~$\omega$ but simultaneously with all the words~$\omega^{(m)}$, by `desubstituting' the word~$\omega^{(m)}$ with respect to~$\sigma_{i_m}$.

Note that Theorem~\ref{t:2} applies in particular to the Tribonacci word; a generalization of its $\mathcal{S}$-directive sequence $123123\cdots$ gives the following balancedness result.

\begin{corollary}\label{cor:2}
There exist $2$-balanced Arnoux-Rauzy words~$\omega$ with unbounded weak (and thus unbounded strong) partial quotients.
\end{corollary}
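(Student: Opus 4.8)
The plan is to prove the corollary by exhibiting an explicit $\mathcal{S}$-directive sequence that satisfies the hypotheses of Theorem~\ref{t:2} (equivalently, avoids every factor of the subshift~$X$ of Remark~\ref{r:sft}) while still containing arbitrarily long runs of a single letter. Since Theorem~\ref{t:2} then guarantees $2$-balance, and arbitrarily long runs force unbounded weak partial quotients, this yields the claim. The key observation driving the construction is that the forbidden patterns do \emph{not} rule out long runs as such; they only forbid exiting a run incorrectly (or stringing two long runs together), so a generalization of the Tribonacci pattern $123123\cdots$ that inserts longer and longer blocks of $1$'s is admissible.

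Concretely, I would fix integers $a_0, a_1, a_2, \ldots$ with $a_n \ge 2$ and $a_n \to \infty$, and consider
$$
(i_m)_{m \ge 0} = 1^{a_0}\,2\,3\,1^{a_1}\,2\,3\,1^{a_2}\,2\,3 \cdots .
$$
Every letter $1,2,3$ occurs infinitely often, so by Theorem~\ref{theo:ar} this is the $\mathcal{S}$-directive sequence of an Arnoux-Rauzy word~$\omega$. The only transitions that occur in $(i_m)$ are $1 \to 1$, $1 \to 2$, $2 \to 3$, and $3 \to 1$, a fact I would record at the outset since both verifications below reduce to it.

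Next I would check the hypotheses of Theorem~\ref{t:2} through the forbidden factors of Remark~\ref{r:sft}. The only repeated letter is $1$, and a maximal run of $1$'s is always followed first by $2$ and then by $3$; hence whenever $i_{m-1} = i_m$ one has $i_{m+1} = 2$ and $i_{m+2} = 3$, so condition~(1) holds and neither $1121$ nor $1122$ (nor any of their permutations) can occur. For condition~(2) it suffices to note that the transition graph above has no $2$-cycle, so no factor of the form $abab$ with $a \ne b$ appears; in particular none of $12121$, $12122$, or their permutations occurs. Thus $(i_m)$ contains no factor of~$X$, and $\omega$ is $2$-balanced.

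Finally, the weak partial quotients of $(i_m)$ are $a_0,1,1,a_1,1,1,a_2,\ldots$, which are unbounded since $a_n \to \infty$; moreover a maximal run of length $k$ cannot sit inside a single strong block together with the other two letters unless that block extends well beyond the run, so arbitrarily long runs force arbitrarily long strong partial quotients, giving the parenthetical claim for free. I do not expect any serious obstacle here: the one point worth checking with care is exactly the compatibility of arbitrarily long runs with the constraints of Theorem~\ref{t:2}, and this is resolved by the remark that the forbidden factors prohibit two consecutive long runs (a factor of type $aabb$) but explicitly permit an \emph{isolated} long run provided it is exited as $1^{a_n}2\,3$, which is precisely the shape of the sequence above.
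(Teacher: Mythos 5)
Your proposal is correct and follows essentially the same route as the paper: the paper's proof takes exactly the family of directive sequences $1^{s_0}\,2\,3\,1^{s_1}\,2\,3\cdots$ and observes that the hypotheses of Theorem~\ref{t:2} are satisfied, which is what you verify (in somewhat more detail, via the forbidden factors of Remark~\ref{r:sft}). No issues to report.
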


\begin{proof}
If  the $\mathcal{S}$-directive sequence $(i_m)_{m \geq 0}$ satisfies $i_0 i_1 \cdots = 1^{s_0} 2 3 1^{s_1} 2 3 \cdots$ for any sequence $(s_n)_{n\ge 0}$ with $s_n \ge 1$, then the assumptions of Theorem~\ref{t:2} are satisfied. 
In other words, any Arnoux-Rauzy word with such an $\mathcal{S}$-directive sequence is $2$-balanced.  
\end{proof}

The following result, which is proved in Section~\ref{subsec:unbalance}, can be considered as a partial converse of Theorem~\ref{t:2}. 
It shows in particular that no element of the set~$X$ defined in Remark~\ref{r:sft} can be omitted: If~a word in~$X$ is a prefix of an $\mathcal{S}$-directive sequence, then the corresponding Arnoux-Rauzy word is not $2$-balanced.
(By symmetry of~$\mathcal{S}$, the statement of Theorem~\ref{t:3} remains of course true if we apply a permutation of the alphabet~$\mathcal{A}$ to $\{1,3\}^*\, 1\, 2^*\, 1\, 2\, \{1,2\}$.)
This extends a result of~\cite{CFZ00}, where it was shown that $\mathcal{S}$-directive sequences starting with $112213$ give Arnoux-Rauzy words that are not $2$-balanced. 
For arbitrary~$C$, prefixes preventing $C$-balancedness are also given in~\cite{CFZ00}.

\begin{theorem} \label{t:3}
Let $\omega$ be an Arnoux-Rauzy word with $d = 3$ and $\mathcal{S}$-directive sequence   $(i_m)_{m \geq 0}$
 starting with a word in $\{1,3\}^*\, 1\, 2^*\, 1\, 2\, \{1,2\}$.
Then $\omega$ is not $2$-balanced.
\end{theorem}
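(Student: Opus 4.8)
The plan is to prove the contrapositive of $2$-balance by producing, at a suitable length, two factors of $\omega$ whose numbers of occurrences of the letter~$2$ differ by at least~$3$. The whole argument is organized around the letter~$2$, because the two substitutions allowed in the prefix block $\{1,3\}^*$ are exactly the two that fix the number of~$2$'s: the row of $M_{\sigma_1}$ and of $M_{\sigma_3}$ indexed by~$2$ is $(0,1,0)$, and neither $\sigma_1$ nor $\sigma_3$ ever creates the factor~$22$. Hence these substitutions act on the set of occurrences of~$2$ merely by enlarging the gaps between consecutive~$2$'s, and the first task is to show that such gap-enlarging operations cannot repair a failure of $2$-balance for the letter~$2$.

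First I would reduce to the case $w=\varepsilon$, i.e.\ to an $\mathcal{S}$-directive sequence beginning with $1\,2^k\,1\,2\,\{1,2\}$. Concretely: if $j\in\{1,3\}$ and $\eta$ is an Arnoux-Rauzy word that fails to be $2$-balanced for the letter~$2$, I claim that $\sigma_j(\eta)$ fails as well. The occurrences of~$2$ in $\sigma_j(\eta)$ are in bijection with those in~$\eta$, and each gap between two consecutive~$2$'s is enlarged: by one for the letter inserted after the left~$2$, plus one for each~$3$ (resp.\ each~$1$) lying inside the gap. A failure of $2$-balance is witnessed by two equal-length factors whose numbers of~$2$'s differ by at least~$3$; under $\sigma_j$ the denser factor (short gaps, few $1$'s and~$3$'s) is stretched less than the sparser one (long $\{1,3\}$-runs), so the difference of the two counts is not decreased. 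Iterating over the letters of $w\in\{1,3\}^*$ leaves us with the core directive $1\,2^k\,1\,2\,x$.

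Next I would analyze the core by desubstitution, working with $\omega^{(p)}=\sigma_1\sigma_2^{k}\sigma_1\sigma_2\big(\omega^{(p+k+3)}\big)$ together with its predecessors $\omega^{(p+1)},\dots$, whose directive sequences begin with $2^k\,1\,2$. Here the two $\sigma_2$'s concentrate the letter~$2$ (the row of $M_{\sigma_2}$ indexed by~$2$ is $(1,1,1)$, so $|\sigma_2(u)|_2=|u|$), creating in $\omega^{(p+1)}$ a block of $k+1$ consecutive~$2$'s, while the outer $\sigma_1$ then separates every~$2$ by an inserted~$1$. Computing the images $\sigma_1\sigma_2^{k}\sigma_1\sigma_2(a)$ explicitly — for instance the image of~$1$ is $1\,(21)^{k+1}\,1\,(21)^{k}$ — exposes the asymmetry between the $(21)^{k+1}$ and $(21)^{k}$ parts, and lets one locate both a $2$-dense factor and, through the images of letters carrying $1$'s and~$3$'s, a $2$-sparse factor. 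Uniform recurrence of $\omega^{(p)}$ guarantees that both types of factor actually occur, whatever the tail $\omega^{(p+k+4)}$ may be.

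The hard part will be to upgrade this asymmetry into a genuine gap of~$3$ (rather than~$2$) in the count of~$2$'s, uniformly in~$k$ and independently of the arbitrary continuation of the directive sequence. For small~$k$ the margin is tight: a direct computation shows that the factors forced by the prefix alone are still $2$-balanced, so there is no bounded explicit witness, and the imbalance has to be produced at a scale dictated by the tail. I therefore expect the decisive step to be a renormalization argument showing that the seed asymmetry created by $1\,2^k\,1\,2$ is propagated to arbitrarily deep levels $\omega^{(m)}$ without being equalized — equivalently, that the letter-$2$ discrepancy is driven strictly above~$2$ at some finite scale for every admissible tail. Controlling this propagation through substitutions that may \emph{a priori} reduce imbalance (the phenomenon noted in the introduction) is the main obstacle; overcoming it is exactly what makes the statement a partial converse to Theorem~\ref{t:2} and shows that no pattern of Remark~\ref{r:sft} can be omitted.
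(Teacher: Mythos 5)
Your proposal has two genuine gaps, one of which is an incorrect step rather than just a missing one. The reduction to $w=\varepsilon$ rests on the claim that, for $j\in\{1,3\}$, the factor that is denser in $2$'s is stretched less by $\sigma_j$ than the sparser one, so that a letter-$2$ imbalance of $3$ survives. This is backwards: $|\sigma_1(u)|-|u|=|u|_2+|u|_3$, so for two factors of equal length the one containing more $2$'s is generically stretched \emph{more}, not less. Concretely, if $|u|=|v|$ and the Parikh difference is $(-3,3,0)$, then $|\sigma_1(u)|-|\sigma_1(v)|=3$, and truncating $\sigma_1(u)$ to length $|\sigma_1(v)|$ can destroy up to $\lceil 3/2\rceil=2$ of the three surplus $2$'s (there is no factor $22$ in $\sigma_1(\eta)$), leaving an imbalance of only $1$. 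So the lemma you need ("a letter-$2$ imbalance of $3$ in $\eta$ forces one in $\sigma_j(\eta)$ for $j\in\{1,3\}$") is not established by the gap-counting heuristic, and Proposition~\ref{prop:2} of the paper shows that claims of the form "imbalance survives substitution" are genuinely delicate. The second gap is that the core of the theorem --- actually exhibiting, for the prefix $1\,2^k\,1\,2\,x$ and an \emph{arbitrary} admissible tail, two equal-length factors of $\omega$ whose numbers of $2$'s differ by $3$ --- is explicitly deferred to an unspecified "renormalization argument". Since you yourself observe that the prefix alone forces no bounded witness, this deferred step is the entire content of the proof, and nothing in the proposal supplies it.

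For comparison, the paper sidesteps both difficulties by never insisting on equal lengths until the very last moment. It tracks the difference of Parikh vectors of a pair of factors of (possibly different) lengths and uses Lemma~\ref{l:subst}: applying $\sigma_i$ preserves the coordinates $j\ne i$ and lets one replace coordinate $i$ by the sum of all coordinates plus any $\delta\in\{0,\pm1,\pm2\}$. Because every letter occurs infinitely often in the tail, the tail alone already produces a pair with difference vector $(\pm1,2,-2)$ (via a chain starting from $(1,0,0)$ or $(-1,0,0)$ whose intermediate vectors are fixed by every $\sigma_j$); the prefix $\{1,3\}^*\,1\,2^*\,1\,2\,\{1,2\}$ then drives this vector along the chain $(\pm1,2,-2)\to(1,2,-2)\to(1,3,-2)\to(0,3,-2)\to(-1,3,-2)$, each intermediate vector being fixed by exactly the substitutions the pattern allows at that stage. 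The terminal vector $(-1,3,-2)$ has coordinate sum $0$, so the two factors have equal length and differ by $3$ in the letter $2$. If you want to salvage your outline, the fix is precisely to abandon the equal-length bookkeeping during the induction and work with difference vectors in the sense of Lemma~\ref{l:subst}.
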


Theorem~\ref{t:3} also shows that Theorem~\ref{t:1} is optimal for $h = 1$: 
There exist Arnoux-Rauzy words with all weak partial quotients equal to~$1$ that are not $2$-balanced, e.g.\ those where the $\mathcal{S}$-directive sequence  starts with~$12121$.

While it is possible to exclude $C$-balancedness by looking at a finite prefix of the $\mathcal{S}$-directive sequence, it was shown in~\cite{CFZ00} that one cannot guarantee $C$-balancedness in this way: For any given imbalance~$C$, for any word $w \in \{1,2,3\}^*$, for any infinite $\mathcal{S}$-directive sequence $(i_m)_{m \geq 0}$, there exists a word $w' \in \{1,2,3\}^*$ such that the $\mathcal{S}$-directive sequence $w w' i_0 i_1 \cdots$ gives an Arnoux-Rauzy word that is not $C$-balanced.

Proposition~\ref{prop:2} below gives another evidence that the situation is quite contrasted: It is possible that $\omega$ is $2$-balanced, $\omega^{(1)}$~is not $2$-balanced, and $\omega^{(m)}$ is again $2$-balanced for all $m \geq 2$.
Thus, the image by an Arnoux-Rauzy substitution of an Arnoux-Rauzy word that is not $2$-balanced can be a $2$-balanced sequence. 
In particular, we cannot replace $\{1,3\}^*$ in Theorem~\ref{t:3} by the full set of words $\{1,2,3\}^*$.

\begin{proposition} \label{prop:2}
Let $\omega$ be an Arnoux-Rauzy word with $d = 3$ and $\mathcal{S}$-directive sequence   $(i_m)_{m \geq 0}$
 starting with $211213$ such that the conditions of Theorem~\ref{t:2} are violated only for $m = 2$.
Then $\omega$ is $2$-balanced (and $\omega^{(m)}$ is $2$-balanced for all $m \geq 2$), but $\omega^{(1)}$ is not $2$-balanced.
\end{proposition}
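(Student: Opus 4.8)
The plan is to separate the three assertions: the claim about $\omega^{(1)}$ follows at once from Theorem~\ref{t:3}, the claims about $\omega^{(m)}$ for $m\ge 2$ from Theorem~\ref{t:2}, and the $2$-balance of $\omega$ itself, which is the substantial point, is obtained by a two-step desubstitution built on the already established balance of $\omega^{(2)}$.

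The $\mathcal{S}$-directive sequence of $\omega^{(1)}$ is $(i_n)_{n\ge 1}$, which starts with $1121$; since $1121\in\{1,3\}^*\,1\,2^*\,1\,2\,\{1,2\}$ (both starred blocks empty, last letter~$1$), Theorem~\ref{t:3} gives that $\omega^{(1)}$ is not $2$-balanced. For $m\ge 2$ I would argue through the subshift-of-finite-type reformulation of Remark~\ref{r:sft}: the hypothesis that the conditions of Theorem~\ref{t:2} fail only at $m=2$ means that the unique factor of $(i_n)_{n\ge 0}$ lying in the forbidden set~$X$ is the occurrence $i_1i_2i_3i_4=1121$, which uses the index~$1$ (one checks that condition~(2) is not even triggered at $m=2$, since $i_0=2\ne 1=i_2$). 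Hence $(i_n)_{n\ge 2}$, and a fortiori every tail $(i_n)_{n\ge m}$ with $m\ge 2$, contains no factor in~$X$, so each $\omega^{(m)}$ with $m\ge 2$ is $2$-balanced.

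For $\omega$ itself I would use the factorisation $\omega=\sigma_2\sigma_1(\omega^{(2)})$, with $\omega^{(2)}$ now known to be $2$-balanced. Set $\tau=\sigma_2\sigma_1$, so that $\tau(1)=12$, $\tau(2)=212$, $\tau(3)=3212$. Every image ends in the letter~$2$, which endows each factor $u$ of $\omega$ with a canonical decomposition $u=p\,\tau(\tilde u)\,s$, where $\tilde u$ is a factor of $\omega^{(2)}$ and $p,s$ are a proper suffix, resp.\ prefix, of an image (so $|p|,|s|\le 3$). The decisive arithmetic features of $\tau$ are that each $\tau(j)$ contains exactly one letter~$1$, whence $|\tau(w)|_1=|w|$, that $|\tau(w)|_3=|w|_3$, and that $|\tau(w)|=2|w|_1+3|w|_2+4|w|_3$. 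Writing $\pi(w)=(|w|_1,|w|_2,|w|_3)$ and $M$ for the incidence matrix of~$\tau$, the condition $|u|=|v|$ on two factors of $\omega$ reduces, up to the bounded boundary words, to the single weighted relation $2\delta_1+3\delta_2+4\delta_3=0$ on the differences $\delta_j=|\tilde u|_j-|\tilde v|_j$.

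To control $|u|_j-|v|_j$ I would pass to discrepancy vectors $E(w)=\pi(w)-|w|\,g$, where $g$ is the letter-frequency vector of $\omega^{(2)}$; since the frequency vector of $\omega$ equals $Mg/(L^{\top}g)$ with $L=(2,3,4)$, the terms proportional to the (unequal) lengths $|\tilde u|,|\tilde v|$ cancel, leaving $|u|_j-|v|_j$ as a bounded expression in $M\bigl(E(\tilde u)-E(\tilde v)\bigr)$, in $L^{\top}\bigl(E(\tilde u)-E(\tilde v)\bigr)$, and in the Parikh vectors of the four boundary words $p,s,p',s'$. Finiteness of balance of $\omega$ is immediate from this together with the finite balance of $\omega^{(2)}$. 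The main obstacle is to bring the constant down to exactly~$2$: this is genuinely delicate because the intermediate word $\omega^{(1)}=\sigma_1(\omega^{(2)})$ is only $3$-balanced, so one cannot argue that each of the two substitution steps preserves $2$-balance. Instead I would run the desubstitution estimate of the proof of Theorem~\ref{t:2} for the composite step $\tau=\sigma_2\sigma_1$, and show that the pairs of factors of $\omega^{(1)}$ witnessing its $3$-imbalance are never images under $\sigma_2$ of two equal-length factors of $\omega$; equivalently, that the weighted relation $2\delta_1+3\delta_2+4\delta_3=0$ together with the admissible boundary configurations forces $-2\le|u|_j-|v|_j\le 2$ for every letter~$j$. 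I expect this last step to reduce to a finite case analysis over the boundary words $p,s,p',s'$ and the extremal factors of the $2$-balanced word $\omega^{(2)}$.
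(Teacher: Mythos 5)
Your treatment of the two peripheral claims is correct and matches the paper: $\omega^{(1)}$ fails $2$-balance by Theorem~\ref{t:3} applied to the prefix $1121$ of its directive sequence, and every $\omega^{(m)}$ with $m\ge 2$ is $2$-balanced because the only violation of the conditions of Theorem~\ref{t:2} sits at $m=2$ and involves only the indices $1,\dots,4$ of $(i_n)_{n\ge0}$.

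The central claim, the $2$-balance of $\omega$ itself, is however not proved. Your discrepancy-vector computation for $\tau=\sigma_2\sigma_1$ does yield \emph{finite} balance of $\omega$ from the $2$-balance of $\omega^{(2)}$, but the step that brings the constant down to exactly $2$ is precisely the content of the proposition, and you defer it to an unspecified ``finite case analysis over the boundary words and the extremal factors of $\omega^{(2)}$''. This is not a finite check as stated: the quantities $E(\tilde u)-E(\tilde v)$, and the differences $|\tilde u|_j-|\tilde v|_j$ for factors of \emph{unequal} length, depend on the whole tail of the directive sequence through the frequency vector $g$, which ranges over an infinite family; the $2$-balance of $\omega^{(2)}$ only bounds these quantities, it does not confine them to a finite list, and the bound it gives is not $2$. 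You also assert without proof that $\omega^{(1)}$ is $3$-balanced. The paper closes the gap differently: it first proves the sharper structural fact that $\omega^{(1)}$, although not $2$-balanced, admits \emph{no} imbalance of size $3$ in the letters $1$ and $3$ (only in the letter $2$) --- obtained by running the desubstitution machinery of Theorems~\ref{t:1} and~\ref{t:2} on the prefix $112$ of the directive sequence of $\omega^{(1)}$ and invoking the $2$-balance of $\omega^{(2)}$, $\omega^{(3)}$, $\omega^{(4)}$ --- and then, assuming $|u|_j-|v|_j=3$ for equal-length factors of $\omega$, treats the cases $j=1,2,3$ separately with Lemmas~\ref{l:preimage} and~\ref{l:unbalanced1}, pushing the imbalance either into $\omega^{(1)}$ (where it is excluded for $j\in\{1,3\}$) or into $\omega^{(2)}$ or $\omega^{(6)}$ (where $2$-balance is violated). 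Some such refined information about which letters of $\omega^{(1)}$ can carry a $3$-imbalance seems indispensable for your plan of showing that the witnessing pairs ``are never images under $\sigma_2$ of two equal-length factors of $\omega$'', and it is missing from your proposal.
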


\begin{remark}
The conditions of Proposition~\ref{prop:2} can be weakened.
Indeed, we will use in the proof only that   $(i_m)_{m \geq 0}$
 starts with $211213$ and that $\omega^{(m)}$ is $2$-balanced for all $m \in \{2,3,4,6\}$ to show that $\omega$ is $2$-balanced (and $\omega^{(1)}$ is not $2$-balanced).
\end{remark}

Summing up, a~complete description of  $2$-balanced Arnoux-Rauzy words on  $3$~letters  in terms of their $\mathcal{S}$-directive sequences seems to be difficult. 
Note that our proofs strongly rely on the  fact that we work with $3$-letter alphabets. The case of  a larger   alphabet   is  certainly     even more
   intricate.

\section{Proofs} \label{sec:proofs}

\subsection{Balances}\label{subsec:balance}

In order to relate balance properties of~$\omega^{(m+1)}$ to the ones of~$\omega^{(m)}$, we first need the following `desubstitution' property.
For any factor~$u$ of~$\omega^{(m)}$, $m \ge 0$, one easily checks that there exists a unique factor of~$\omega^{(m+1)}$, denoted by~$u'$, that satisfies 
\begin{equation} 
i_m^\delta\, \sigma_{i_m}(u') = u\, i_m^\varepsilon\,, \label{e:uu'}
\end{equation}
where
$$
\delta = \left\{\begin{array}{cl}1 & \mbox{if $u$ starts with $i_m$,}\\[.5ex] 0 & \mbox{otherwise,}\end{array}\right. \quad \varepsilon = \left\{\begin{array}{cl}0 & \mbox{if $u$ ends with $i_m$,}\\[.5ex] 1 & \mbox{otherwise.}\end{array}\right.
$$
Note that the definition of $u'$ depends on~$m$. 
(More precisely, it depends only on~$i_m$.) 
For simplicity, we omit this dependence.
\begin{example} \label{ex:1}
Let $u=12131212$.  Assume that  $i_m=1$. One has $\delta=1$,  $\varepsilon=1$, and  $u'=2322$, with 
$1 \sigma_1(2322)=u1$.
\end{example}
We thus introduce the following notation.

\begin{notation} \label{d:u'}
For a factor $u$ of~$\omega^{(m)}$, $m \ge 0$, let $u'$ be the unique factor of $\omega^{(m+1)}$ satisfying~(\ref{e:uu'}).
Recursively, let $u^{(n)}$ be the factor of $\omega^{(m+n)}$ given by $u^{(0)} = u$ and $u^{(n+1)} = (u^{(n)})'$, $n \ge 0$. 
\end{notation}

\begin{example} \label{ex:2}
We continue Example \ref{ex:1} with 
 $u=12131212$.  We assume that  $i_m=1$, $i_{m+1}=i_{m+2}=2$.
One has $u^{(0)}=u$, $u^{(1)}=u'=2322$, $u^{(2)}=32$, and   $u^{(3)}=3$.
Indeed,  $1 \sigma_1(u^{(1)})=u^{(0)}1$,  $2 \sigma_2(u^{(2)})=u^{(1)}$,   and $ \sigma_2(u^{(3)})=u^{(2)}$.

\end{example}

The following lemma  provides information on the relations between the length and the number of occurrences of letters in $u$ and~$u'$.

\begin{lemma} \label{l:preimage}
Let $u$ be a factor of~$\omega^{(m)}$, $m \ge 0$. 
Then there exists $\delta_u \in \{-1,0,1\}$ such that
$$
\begin{array}{ll}
& |u'| = |u|_{i_m} + \delta_u,\\
&|u'|_j = \left\{\begin{array}{ll}|u|_j & \mbox{if}\ j \in \mathcal{A} \setminus \{i_m\},\\[.5ex] 2\, |u|_{i_m} - |u| + \delta_u & \mbox{if}\ j = i_m.\end{array}\right.
\end{array}
$$
Furthermore, $|u'| < |u|$ if $|u| \ge 2$, and $|u'| \in \{0, 1\}$ if $|u| \in \{0, 1\}$.  
\end{lemma}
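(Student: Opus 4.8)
The plan is to read off every quantity in the statement from the defining relation~(\ref{e:uu'}) by applying to it the length function and the Parikh (letter-counting) maps. Write $i = i_m$ for short. Since $\sigma_i$ sends $i \mapsto i$ and $j \mapsto ji$ for every $j \neq i$, a direct count shows that for any word $w$ one has $|\sigma_i(w)|_i = |w|$, $|\sigma_i(w)|_j = |w|_j$ for $j \neq i$, and consequently $|\sigma_i(w)| = 2|w| - |w|_i$. These three identities are the only facts about $\sigma_i$ that I will use.

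Next I would apply these to both sides of~(\ref{e:uu'}), namely $i^\delta\, \sigma_i(u') = u\, i^\varepsilon$. Counting occurrences of a letter $j \neq i$ on both sides gives immediately $|u'|_j = |u|_j$, which is the first line of the formula for $|u'|_j$. Counting occurrences of $i$ gives $\delta + |u'| = |u|_i + \varepsilon$, that is, $|u'| = |u|_i + \delta_u$ with $\delta_u := \varepsilon - \delta$; since $\delta, \varepsilon \in \{0,1\}$ we have $\delta_u \in \{-1,0,1\}$, as required. The remaining formula then follows by subtraction: $|u'|_i = |u'| - \sum_{j \neq i} |u'|_j = |u'| - (|u| - |u|_i) = 2|u|_i - |u| + \delta_u$.

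It remains to establish the two length inequalities, which is the only point requiring a short case analysis (and hence the main obstacle, such as it is). For the strict inequality when $|u| \ge 2$ I would distinguish the sign of $\delta_u$. If $\delta_u \le 0$, then $|u'| = |u|_i + \delta_u \le |u|_i \le |u|$, with equality forcing $\delta_u = 0$ and $|u|_i = |u|$, i.e.\ $u = i^{|u|}$; but a word consisting only of $i$'s both starts and ends with $i$, so $\delta = 1$, $\varepsilon = 0$ and hence $\delta_u = -1$, a contradiction, so $|u'| < |u|$. If $\delta_u = 1$, then $\delta = 0$ and $\varepsilon = 1$, so $u$ neither starts nor ends with $i$; as $|u| \ge 2$ its first and last letters occupy distinct positions and are both different from $i$, whence $|u| - |u|_i \ge 2$ and $|u'| = |u|_i + 1 \le |u| - 1 < |u|$.

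Finally, for $|u| \in \{0,1\}$ the formula $|u'| = |u|_i + \delta_u$ together with $|u'| \ge 0$ pins the value down: when $|u| = 0$ one has $|u'| = \delta_u \ge 0$, hence $|u'| \in \{0,1\}$; when $|u| = 1$ one has $|u|_i \in \{0,1\}$, and the value $|u'| = 2$ could only arise from $|u|_i = 1$ (so $u = i$) together with $\delta_u = 1$ (so $u$ neither starts nor ends with $i$), which is impossible, so again $|u'| \in \{0,1\}$. This completes the plan.
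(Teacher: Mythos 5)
Your proposal is correct and follows essentially the same route as the paper: both read off the Parikh identities directly from equation~(\ref{e:uu'}) with $\delta_u = \varepsilon - \delta$, and both settle the length inequalities by the same two observations (that $u = i_m^{|u|}$ forces $\delta_u = -1$, and that $\delta_u = 1$ forces $u$ to neither start nor end with $i_m$). The only difference is cosmetic: you organize the case analysis by the sign of $\delta_u$, while the paper splits on whether $u$ consists solely of the letter $i_m$.
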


\begin{example} \label{ex:3}
We continue Example \ref{ex:1} with  $u=12131212$, $i_m=1$ and $u'=2322$.
One has $|u'|=4= |u|_1$, with   $\delta_u=0$. Furthermore, $|u'|_1=0=2|u|_1 -|u|$,  $|u'|_2=|u|_2$,  $|u'|_3=|u|_3$.
\end{example}
\begin{proof}
By the definition of~$u'$, we have $|u'|_j = |u|_j$ for all $j \in \mathcal{A} \setminus \{i_m\}$, and $|u'| = |u|_{i_m} + \delta_u$ with $\delta_u = \varepsilon - \delta$.
It follows that $|u'|_{i_m} = 2\, |u|_{i_m} - |u| + \delta_u$. 
This ends the proof of the first part of the statement.

If $|u| = 0$, then we have $u' = i_m$. 
If $|u| = |u|_{i_m} \ge 1$, then we have $\delta_u = -1$, thus $|u'| < |u|$. 
Otherwise, if $u$ does not only contain the letter~$i_m$, we have $|u'| \le |u|$, with $|u'| = |u|$ if and only if $|u|_{i_m} = |u| - 1$, $\delta_u = 1$.
Since $\delta_u = 1$ means that $u$ does not start or end with~$i_m$, $|u|_{i_m} = |u| - 1$ implies that $|u| = 1$. 
\end{proof}

The following two lemmas allow us to easily exhibit a pair of factors of the same length displaying some imbalances when sufficiently large imbalances occur for factors that are not necessarily of the same length.

\begin{lemma} \label{l:unbalanced1}
Let $u, v \in \mathcal{A}^*$ and $j \in \mathcal{A}$ such that
$$
|u|_j - |v|_j > C + \max\big(0, |u|-|v|\big)\,.
$$
Then there exist factors $\hat{u}$ of~$u$ and $\hat{v}$ of~$v$ with $|\hat{u}| = |\hat{v}|= \min(|u|, |v|)$ and $|\hat{u}|_j - |\hat{v}|_j > C$.
\end{lemma}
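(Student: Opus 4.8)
The plan is to obtain $\hat u$ and $\hat v$ by truncating the longer of the two words down to the common length $\min(|u|,|v|)$, and to control how the counts $|\cdot|_j$ change under such truncation. Two elementary facts will do all the work: passing from a word to a factor never increases the number of occurrences of a fixed letter, and deleting $\ell$ letters from a word decreases $|\cdot|_j$ by at most~$\ell$. It is natural to split the argument according to which of $u,v$ is shorter, since the quantities $\min(|u|,|v|)$ and $\max(0,|u|-|v|)$ behave differently in the two cases.

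Suppose first that $|u|\le|v|$, so that $\min(|u|,|v|)=|u|$ and $\max(0,|u|-|v|)=0$; the hypothesis then reads $|u|_j-|v|_j>C$. I would simply set $\hat u=u$ and let $\hat v$ be any factor of~$v$ of length~$|u|$ (for instance its prefix of that length, which exists since $|u|\le|v|$). As $\hat v$ is a factor of~$v$ we have $|\hat v|_j\le|v|_j$, whence $|\hat u|_j-|\hat v|_j=|u|_j-|\hat v|_j\ge|u|_j-|v|_j>C$, as required, and $|\hat u|=|\hat v|=|u|=\min(|u|,|v|)$.

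Now suppose $|u|>|v|$, so that $\min(|u|,|v|)=|v|$ and $\max(0,|u|-|v|)=|u|-|v|$. Here I would set $\hat v=v$ and take $\hat u$ to be the prefix of~$u$ of length~$|v|$. Passing from~$u$ to this prefix deletes the last $|u|-|v|$ letters, hence removes at most $|u|-|v|$ occurrences of~$j$, so $|\hat u|_j\ge|u|_j-(|u|-|v|)$. Combining this with the hypothesis $|u|_j-|v|_j>C+(|u|-|v|)$ yields $|\hat u|_j-|\hat v|_j=|\hat u|_j-|v|_j\ge(|u|_j-|v|_j)-(|u|-|v|)>C$, finishing this case.

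I do not expect a serious obstacle: the content of the statement is precisely the bookkeeping in the two displayed estimates, and the correction term $\max(0,|u|-|v|)$ is there exactly to absorb the $j$-letters lost when the longer word is cut down to the common length. The only point that needs care is to make the case distinction explicit, so that $\max(0,|u|-|v|)$ is read correctly in each case, and to observe that in the first case one shortens~$v$ rather than~$u$, which is harmless since only a lower bound on $|\hat u|_j-|\hat v|_j$ is needed.
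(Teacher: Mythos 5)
Your proof is correct and follows essentially the same route as the paper: split into the cases $|u|\le|v|$ and $|u|>|v|$, keep the shorter word intact, truncate the longer one to the common length, and absorb the at most $|u|-|v|$ lost occurrences of~$j$ into the correction term $\max\big(0,|u|-|v|\big)$. No gaps.
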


\begin{proof}
We first assume $|u| \le |v|$,  hence $ |u|_j - |v|_j > C$.  
Let $\hat{u} = u$ and take for~$\hat{v}$ any factor of~$v$ with the same length as~$u$ (hence $|\hat{v}| = |\hat{u}|$). 
Then $|\hat{u}|_j - |\hat{v}|_j \ge |u|_j - |v|_j > C$. 

We now assume $|u| > |v|$. 
Take $\hat{v} = v$ and any factor $\hat{u}$ of~$u$ with $|\hat{u}| = |\hat{v}|$. 
One has $|\hat{u}|_j  + \big(|u| - |v|)   \ge |u|_j $.
We get $|\hat{u}|_j - |\hat{v}|_j \ge |u|_j - \big(|u| - |v|) - |v|_j > C$.
\end{proof}

\begin{lemma} \label{l:unbalanced2}
Let $u, v$ be factors of $\sigma_i(w)$ for some $w \in \mathcal{A}^*$ such that $|u| \ge |v|$ and
$$
|u|_j - |v|_j > C + \bigg\lceil\frac{|u|-|v|}{2}\bigg\rceil \quad \mbox{for some}\ j \in \mathcal{A} \setminus \{i\}\,.
$$
Then there exist factors $\hat{u}$ of~$u$ and $\hat{v}$ of~$v$ with $|\hat{u}| = |\hat{v}|$ and $|\hat{u}|_j - |\hat{v}|_j > C$.
\end{lemma}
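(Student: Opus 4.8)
The plan is to exploit the specific shape of~$\sigma_i$, which is exactly what distinguishes this lemma from the cruder estimate of Lemma~\ref{l:unbalanced1}. The crucial observation is that for $j \in \mathcal{A} \setminus \{i\}$ one has $\sigma_i(j) = ji$, so in the word $\sigma_i(w)$ every occurrence of the letter~$j$ is immediately followed by an occurrence of~$i$. In particular, no two occurrences of~$j$ can be adjacent in $\sigma_i(w)$, and hence in any factor of $\sigma_i(w)$ of length~$\ell$ the letter~$j$ occurs at most $\lceil \ell/2\rceil$ times. This single structural fact drives the whole argument; I expect verifying it—and being careful that it applies to arbitrary factors, not only to full images $\sigma_i(w)$—to be the only delicate point, although it reduces to the elementary remark that the $j$'s are isolated, so that the non-adjacency observed in the full image is inherited by every factor.

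With this in hand, I would not trim both words as in Lemma~\ref{l:unbalanced1}, but keep~$v$ untouched and shorten only~$u$. Concretely, set $\hat v = v$ and let $\hat u$ be the prefix of~$u$ of length $|v|$; this is legitimate because $|u| \ge |v|$, and it gives $|\hat u| = |\hat v|$ as required. The letters discarded from~$u$ form a suffix of~$u$, which is itself a factor of $\sigma_i(w)$ of length $|u| - |v|$.

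Applying the counting bound above to this discarded suffix, the number of occurrences of~$j$ removed from~$u$ is at most $\lceil(|u|-|v|)/2\rceil$, so that
$$
|\hat u|_j \ge |u|_j - \bigg\lceil\frac{|u|-|v|}{2}\bigg\rceil.
$$
Combining this with the hypothesis $|u|_j - |v|_j > C + \lceil(|u|-|v|)/2\rceil$ yields
$$
|\hat u|_j - |\hat v|_j = |\hat u|_j - |v|_j > C,
$$
which is the desired inequality (the symmetric choice of~$\hat u$ as a suffix of~$u$ works equally well). Thus the only real content is the factor-of-two saving coming from the non-adjacency of the~$j$'s in $\sigma_i(w)$; everything else is a one-line estimate.
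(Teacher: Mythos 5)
Your proof is correct and follows essentially the same route as the paper's: take $\hat v = v$, take $\hat u$ a prefix (or suffix) of $u$ of length $|v|$, and use the fact that $j \ne i$ implies no two adjacent occurrences of $j$ in $\sigma_i(w)$, so the discarded part of $u$ loses at most $\lceil(|u|-|v|)/2\rceil$ occurrences of~$j$. You merely spell out the counting step a bit more explicitly than the paper does.
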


\begin{proof}
Let $\hat{v} = v$ and $\hat{u}$ be the prefix (or suffix) of $u$ with $|\hat{u}| = |\hat{v}|$. 
Since there are no two consecutive letters $j$ in~$\sigma_i(w)$ and thus in~$u$, we obtain that $|u|_j \le |\hat{u}|_j + \big\lceil\frac{|u|-|v|}{2}\big\rceil$.
This implies $|\hat{u}|_j - |\hat{v}|_j \ge |u|_j - \big\lceil\frac{|u|-|v|}{2}\big\rceil - |v|_j > C$.
\end{proof}

\begin{proof}[Proof of Theorem~\ref{t:1}]
Let $\omega$ be an Arnoux-Rauzy word with $d = 3$ and weak partial quotients that are bounded by~$h$.
We will prove that not only~$\omega$, but all~$\omega^{(m)}$ are  $({2h\!+\!1})$-balanced. 
We will work by contradiction and consider a pair of factors of equal length that is not $({2h\!+\!1})$-balanced. 
We will furthermore take it with minimal length among all such pairs, and for all sequences~$\omega^{(m)}$.
This minimality assumption will prove to be crucial and provide the desired contradiction.

So, suppose that $\omega^{(m)}$ is not $({2h\!+\!1})$-balanced for some $m \ge 0$.
Let $u, v$ be a pair of factors of~$\omega^{(m)}$, $m \ge 0$, with $|u| = |v|$, that is not $({2h\!+\!1})$-balanced.
Assume w.l.o.g.\ that $|u|$ is minimal, that is, for any~$\ell \ge 0$, for any pair of factors $x,y$ of~$ \omega^{(\ell)}$ with $|x|=|y|< |u| =|v|$, and for any letter~$k$, one has $\big||x|_k - |y|_k\big| \leq 2h+1$.
This implies that $\big||u|_j - |v|_j\big|= 2h+2$  for some~$j \in \mathcal{A}$.
Moreover, we can assume w.l.o.g.\ that $m = 0$ and  that
$$
|u|_j - |v|_j = 2h+2\,. 
$$ 

We now want to reach a contradiction.

We can easily rule out the case $j = i_0$. 
Indeed, Lemma~\ref{l:preimage} gives $|u'|_{i_0} - |v'|_{i_0} = 2\, (2h+2) + \delta_u - \delta_v$ and $|u'| - |v'| = 2h+2 + \delta_u - \delta_v$, thus Lemma~\ref{l:unbalanced1} provides in turn factors $\hat{u}, \hat{v}$ of~$\omega^{(1)}$ with $|\hat{u}| = |\hat{v}| = |v'|$ that are not $({2h\!+\!1})$-balanced, with $|v'| < |v|$ (note that $|v| = |u| \ge |u|_j = |v|_j + 2h+2 \ge 2$), contradicting the minimality of~$|u|$. 

Assume from now on that $j \ne i_0$, thus $|u'|_j - |v'|_j = 2h+2$. 
If $|u'| \le |v'|$, then Lemma~\ref{l:unbalanced1} gives factors $\hat{u}, \hat{v}$ of~$\omega^{(1)}$ with $|\hat{u}| = |\hat{v}| =|v'|< |u|$ and $|\hat{u}|_j - |\hat{v}|_j \ge 2h + 2$, contradicting again the minimality of~$|u|$. 
Therefore, we must have $|u'|> |v'|$. 
By Lemma~\ref{l:preimage}, this implies in particular that $|u|_{i_0} - |v|_{i_0} \ge -1$. 
The equality $|u| = |v|$ then gives that $|u|_k - |v|_k \leq  -2h-1$ for the remaining third letter $k \in \mathcal{A} \setminus \{i_0, j\}$.

Observe that we also have $|u|_k - |v|_k \geq  -2h-1$. 
For if $|u|_k - |v|_k \le -2h-2$, then by Lemma~\ref{l:preimage} one has $|u'|_k - |v'|_k \le -2h-2$, and Lemma~\ref{l:unbalanced1}  together with the fact that  $|u'|> |v'|$ gives factors $\hat{u}, \hat{v}$ of~$\omega^{(1)}$ with $|\hat{u}| = |\hat{v}| =|v'|< |u|$ and $|\hat{v}|_k - |\hat{u}|_k \ge 2h+2$, contradicting the minimality of~$|u|$. 
Thus we obtain that
$$
|u|_{i_0} - |v|_{i_0} = -1\,, \quad |u|_j - |v|_j = 2h+2\,, \quad |u|_k - |v|_k = -2h-1\,,
$$
with $\mathcal{A} = \{i_0,j,k\}$.

Let $n \ge 1$ be such that $i_0 = i_1 = \cdots = i_{n-1} \ne i_n$.
Then
$$
|u^{(\ell)}|_j - |v^{(\ell)}|_j = 2h+2 \quad \mbox{and} \quad |u^{(\ell)}|_k- |v^{(\ell)}|_k = -2h-1 \quad \mbox{for}\ 0 \le \ell \le n.
$$
Therefore, the above arguments showing that $|u'| > |v'|$ now show that $|u^{(\ell+1)}| > |v^{(\ell+1)}|$ for $0 \le \ell < n$ (note that $|u^{(\ell)}| \le |u|$ by Lemma~\ref{l:preimage}).

For $1 \le \ell < n$, we do not have $|u^{(\ell)}|_{i_0} - |v^{(\ell)}|_{i_0} = -1$.
Nevertheless, we calculate
$$
|u^{(\ell+1)}| - |v^{(\ell+1)}| \le |u^{(\ell)}|_{i_0} - |v^{(\ell)}|_{i_0} + 2 = |u^{(\ell)}| - |v^{(\ell)}| + 1
$$
for $0 \le \ell < n$.
By the assumption of the theorem, we have $n \le h$, thus 
\begin{equation}\label{eq:h}
1 \le |u^{(n)}| - |v^{(n)}| \le n \le h\,.
\end{equation}

We cannot have $i_n = k$, since this would imply $|u^{(n+1)}| - |v^{(n+1)}| \le -2h+1 < 0$ and $|u^{(n+1)}|_j - |v^{(n+1)}|_j = 2h+2$, hence the existence of factors $\hat{u}, \hat{v}$ of~$\omega^{(n+1)}$ with $|\hat{u}| = |\hat{v}| < |u|$ and $|\hat{u}|_j - |\hat{v}|_j \ge 2h+2$ by Lemma~\ref{l:unbalanced1}.

Since $i_n \neq k$, we have $i_0 = i_1 = \cdots = i_{n-1} \ne i_n = j$.
Then, by Lemma~\ref{l:preimage},
$$
|u^{(n+1)}| - |v^{(n+1)}| \ge 2h\,.
$$ 
Let $r \ge 1$ be such that $i_n = i_{n+1} = \cdots = i_{n+r-1} \ne i_{n+r}$.
According to Lemma~\ref{l:preimage}, $|u^{(\ell)}|_k = |u^{(n)}|_k$, $|v^{(\ell)}|_k = |v^{(n)}|_k$, and similarly $|u^{(\ell)}|_{i_0} = |u^{(n)}|_{i_0}$, $|v^{(\ell)}|_{i_0} = |v^{(n)}|_{i_0}$,  for $n \le \ell \le n+r$.
We thus deduce from these equalities and from~(\ref{eq:h}) that 
$$
|u^{(\ell)}|_j - |v^{(\ell)}|_j - \big(|u^{(\ell)}| - |v^{(\ell)}|) = |u^{(n)}|_j - |v^{(n)}|_j - \big(|u^{(n)}| - |v^{(n)}|) \ge h+2
$$
for $n \le \ell \le n+r$, and then, from Lemma~\ref{l:preimage}, that 
$$
|u^{(\ell+1)}| - |v^{(\ell+1)}| \ge |u^{(\ell)}|_j - |v^{(\ell)}|_j - 2 \ge  |u^{(\ell)}| - |v^{(\ell)}| + h
$$
for $n \le \ell < n+r$, in particular $|u^{(n+r)}| - |v^{(n+r)}| \ge |u^{(n+1)}| - |v^{(n+1)}|$.
This implies
\begin{multline*}
|u^{(n+r)}|_j - |v^{(n+r)}|_j - \bigg\lceil\frac{|u^{(n+r)}|-|v^{(n+r)}|}{2}\bigg\rceil \\
= |u^{(n+r)}|_j - |v^{(n+r)}|_j - \big(|u^{(n+r)}|-|v^{(n+r)}|\big) + \bigg\lfloor\frac{|u^{(n+r)}|-|v^{(n+r)}|}{2}\bigg\rfloor \\
\ge h+2 + \bigg\lfloor\frac{|u^{(n+1)}|-|v^{(n+1)}|}{2}\bigg\rfloor \ge 2h+2\,.
\end{multline*}
Now, Lemma~\ref{l:unbalanced2} provides factors $\hat{u}, \hat{v}$ of~$\omega^{(n+r)}$ with $|\hat{u}| = |\hat{v}| = |v^{(n+r)}|$ and $|\hat{u}|_j - |\hat{v}|_j \ge 2h+2$.
Since $|v^{(n+r)}| <  |v|$ by Lemma~\ref{l:preimage}, we have obtained a 
contradiction to the minimality of~$|u|$ in the last remaining case as well.
Therefore, $\omega$~is $({2h\!+\!1)}$-balanced if the weak partial quotients are bounded by~$h$.
\end{proof}

\begin{proof}[Proof of Theorem~\ref{t:2}]
Similarly to the proof of Theorem~\ref{t:1}, suppose that $\omega^{(m)}$ is not $2$-balanced for some $m \ge 0$.
Let $u, v$ be factors of $\omega^{(m)}$, $m \ge 0$, with minimal length $|u| = |v|$ such that $|u|_j - |v|_j = 3$ for some $j \in \mathcal{A}$, and assume w.l.o.g.\ $m = 0$.

The proof of Theorem~\ref{t:1} shows that we must have 
$$
|u|_{i_0} - |v|_{i_0} = -1\,, \quad |u|_j - |v|_j = 3\,, \quad |u|_k - |v|_k = -2\,,
$$
with $\mathcal{A} = \{i_0, j, k\}$, and 
$$
|u^{(n)}|_j - |v^{(n)}|_j = 3\,, \quad |u^{(n)}|_k - |v^{(n)}|_k = -2\,, \quad 1 \le |u^{(n)}| - |v^{(n)}| \le n\,,
$$
for $n \ge 1$ satisfying $i_0 = i_1 = \cdots = i_{n-1} \ne i_n$.
Moreover, we must have $i_n = j$, thus
$$
|u^{(n+1)}|_k - |v^{(n+1)}|_k = -2\,, \quad |u^{(n+1)}| - |v^{(n+1)}| \ge |u^{(n)}|_j - |v^{(n)}|_j - 2 = 1\,.
$$

By assumption~(1) of the theorem, we have either $n \geq 2$ and $i_{n+1}=k$, or $n=1$.
  
If $i_{n+1} = k$ (which holds in particular if $n \ge 2$), then Lemma~\ref{l:preimage} gives 
\begin{align} 
& |v^{(n+2)}|_k - |u^{(n+2)}|_k - \big(|v^{(n+2)}| - |u^{(n+2)}|\big) \nonumber \\
& \qquad = |v^{(n+1)}|_k - |u^{(n+1)}|_k - \big(|v^{(n+1)}| - |u^{(n+1)}|\big) \ge 3 \label{e:dif}
\end{align}
and $|v^{(n+2)}|_k \ge |u^{(n+2)}|_k$.
Thus Lemma~\ref{l:unbalanced1} provides factors $\hat{u}, \hat{v}$ of~$\omega^{(n+2)}$ with $|\hat{u}| = |\hat{v}|= |u^{(n+2)}|$ and $|\hat{v}|_k - |\hat{u}|_k \ge 3$.
As $|u^{(n+2)}| < |u|$ by Lemma~\ref{l:preimage}, this contradicts the minimality of~$|u|$.

It remains to consider the case $n = 1$, $i_2 \ne k$. 
We have seen above that $n = 1$ means that $i_1 = j$, $|u^{(1)}|_j -  |v^{(1)}|_j = 3$, $|u^{(1)}|_k -  |v^{(1)}|_k = -2$, and $|u^{(1) }| - |v^{(1)}| = 1$.
Let $r \ge 1$ be such that $j = i_1 = i_2 = \cdots = i_r \ne i_{r+1}$.
Similarly to the proof of Theorem~\ref{t:1}, we get
$$
|u^{(\ell)}|_j -  |v^{(\ell)}|_j - \big(|u^{(\ell)}| - |v^{(\ell)}|\big) = |u^{(1)}|_j -  |v^{(1)}|_j - \big(|u^{(1)}| - |v^{(1)}|\big) = 2
$$ 
for $1 \le \ell \le r+1$, hence
$$
|u^{(\ell+1)}| - |v^{(\ell+1)}| \ge |u^{(\ell)}|_j - |v^{(\ell)}|_j - 2 = |u^{(\ell)}| - |v^{(\ell)}|
$$
for $1 \le \ell \le r$, thus 
$$
|u^{(r+1)}| - |v^{(r+1)}| \ge 1\,, \quad |u^{(r+1)}|_j - |v^{(r+1)}|_j \ge 3\,, \quad |u^{(r+1)}|_k - |v^{(r+1)}|_k = -2\,.
$$

We cannot have $i_{r+1} = k$. 
Indeed, we would have $|u^{(r+2)}|_j - |v^{(r+2)}|_j \ge 3$ and $|u^{(r+2)}| \le |v^{(r+2)}|$ by Lemma~\ref{l:preimage}, contradicting the minimality of~$|u|$ by Lemma~\ref{l:unbalanced1}.

We thus have $i_{r+1} = i_0$, hence $|u^{(r+2)}|_j - |v^{(r+2)}|_j \ge 3$ and $|u^{(r+2)}|_k - |v^{(r+2)}|_k = -2$ by Lemma~\ref{l:preimage}. 
Similarly to the preceding paragraph, we exclude $i_{r+2} = k$. 
Then assumption (1) of the theorem forces $r = 1$.
Let now $s \ge 1$ be such that $(i_0=)\, i_2 = i_3 = \cdots = i_{s+1} \ne i_{s+2}$.
Again, Lemma~\ref{l:preimage} gives $|u^{(s+2)}|_j - |v^{(s+2)}|_j \ge 3$ and $|u^{(s+2)}|_k - |v^{(s+2)}|_k = -2$, and we can exclude $i_{s+2} = k$, i.e., we have $i_{s+2} = j$.
By Lemma~\ref{l:preimage}, we have
$$
|u^{(s+3)}| - |v^{(s+3)}| \ge |u^{(s+2)}|_j - |v^{(s+2)}|_j - 2 \ge 1\,, \quad |u^{(s+3)}|_k - |v^{(s+3)}|_k = -2\,.
$$

Since $i_0 = i_s = i_{s+1} \ne i_{s+2} = j$ in case $s \ge 2$, and $i_0 = i_2 \ne i_3 = i_1 = j$ in case $s = 1$, the assumptions of the theorem force now $i_{s+3} = k$.
Replacing $n$ by~${s\!+\!2}$ in~(\ref{e:dif}) and the subsequent lines, we get factors $\hat{u}, \hat{v}$ of~$\omega^{(s+4)}$ with $|\hat{u}| = |\hat{v}| = |u^{(s+4)}| < |u|$ and $|\hat{v}|_k - |\hat{u}|_k \ge 3$, providing the desired contradiction to the minimality of~$|u|$.
\end{proof}

\subsection{Imbalances} \label{subsec:unbalance}

In order to prove Theorem~\ref{t:3} and Proposition~\ref{prop:2}, we first need a lemma that allows us to exhibit pairs of factors in~$\omega^{(m)}$ having prescribed differences of Parikh vectors, with the \emph{Parikh vector}  associated with the word~$w$ being defined as $(|w|_1,|w|_2,|w|_3)$.

\begin{lemma} \label{l:subst}
Let $u, v$ be factors of $\omega^{(m+1)}$, $m \ge 0$. 
Then, for any $\delta \in \{0, \pm1, \pm2\}$, there exist factors $\hat{u}, \hat{v}$ of~$\omega^{(m)}$ such that
$$
|\hat{u}|_{i_m} - |\hat{v}|_{i_m} = |u| - |v| + \delta\,, \quad |\hat{u}|_j - |\hat{v}|_j = |u|_j - |v|_j\ \mbox{for all}\ j \in \mathcal{A} \setminus \{i_m\}\,.
$$
\end{lemma}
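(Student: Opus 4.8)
The plan is to reverse the desubstitution of Lemma~\ref{l:preimage}: rather than passing from a factor of~$\omega^{(m)}$ to one of~$\omega^{(m+1)}$, I will apply the substitution~$\sigma_{i_m}$ to the given factors $u,v$ of~$\omega^{(m+1)}$ and then correct the resulting $i_m$-counts at the word boundaries. Since $\omega^{(m)}$ has the same language as $\sigma_{i_m}(\omega^{(m+1)})$, the words $\sigma_{i_m}(u)$ and $\sigma_{i_m}(v)$ are factors of~$\omega^{(m)}$. A direct count, exactly as in Lemma~\ref{l:preimage}, gives $|\sigma_{i_m}(w)|_j = |w|_j$ for every $j \in \mathcal{A}\setminus\{i_m\}$ and $|\sigma_{i_m}(w)|_{i_m} = |w|$. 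Hence the choice $\hat u = \sigma_{i_m}(u)$, $\hat v = \sigma_{i_m}(v)$ already settles the case $\delta = 0$, and it leaves the non-$i_m$ components correct for every~$\delta$; the whole problem is therefore to modify the number of letters~$i_m$ by a net amount $\delta \in \{\pm1,\pm2\}$ without touching the other letters.

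The key structural observation is that $\sigma_{i_m}(j)$ ends with~$i_m$ for every letter~$j$. Consequently, placing an occurrence of~$u$ in~$\omega^{(m+1)}$ away from the origin (possible by recurrence), the corresponding occurrence of $\sigma_{i_m}(u)$ in~$\omega^{(m)}$ is immediately preceded by~$i_m$, the last letter of the preceding block; and $\sigma_{i_m}(u)$ itself ends with~$i_m$ whenever $u \neq \varepsilon$. This yields two elementary moves, each affecting only the $i_m$-count and each always available for a nonempty argument: prepending an~$i_m$, which turns $\sigma_{i_m}(u)$ into the factor $i_m\,\sigma_{i_m}(u)$ and raises $|\cdot|_{i_m}$ by one; and deleting the trailing~$i_m$, which passes to a prefix and lowers $|\cdot|_{i_m}$ by one. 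Applying these moves independently to~$\hat u$ and to~$\hat v$ shifts the difference $|\hat u|_{i_m}-|\hat v|_{i_m}$ by any prescribed value in $\{-2,-1,0,1,2\}$ --- for instance, for $\delta=+2$ one prepends an~$i_m$ to~$\hat u$ and deletes the trailing~$i_m$ of~$\hat v$ --- while the non-$i_m$ differences stay equal to $|u|_j-|v|_j$. This disposes of every case in which $u$ and~$v$ are long enough for the required deletions to be legal.

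The main obstacle is the bookkeeping in the degenerate cases, namely when $\delta = \pm2$ and one of $u,v$ is too short (in particular empty) for a deletion, so that the two symmetric moves do not both apply. Here I will argue directly from the run structure of~$\omega^{(m)}$ rather than from a single occurrence of $\sigma_{i_m}(u)$. Two facts make this routine: first, since $\sigma_{i_m}(j)=j\,i_m$ for $j\neq i_m$ and every block ends with~$i_m$, each non-$i_m$ letter of~$\omega^{(m)}$ is immediately preceded and immediately followed by~$i_m$; second, because $\sigma_{i_m}(j\,i_m^{k}) = j\,i_m^{k+1}$, the maximal run of~$i_m$ in~$\omega^{(m)}$ is strictly longer than in~$\omega^{(m+1)}$, so in particular $i_m^{2}$ is always a factor. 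Taking a non-$i_m$ letter sitting at the end of a maximal $i_m$-run, one can then exhibit two factors with identical non-$i_m$ content whose $i_m$-counts differ by exactly the required amount, which settles the remaining values of~$\delta$. I expect this boundary analysis, and the verification that the needed copies of~$i_m$ are genuinely present in each short case, to be the only delicate part; the generic case follows immediately from the two moves.
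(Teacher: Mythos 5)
Your generic argument is exactly the paper's proof: take $\hat u=\sigma_{i_m}(u)$, $\hat v=\sigma_{i_m}(v)$, note that $|\sigma_{i_m}(w)|_{i_m}=|w|$ and $|\sigma_{i_m}(w)|_j=|w|_j$ for $j\ne i_m$, and then shift the $i_m$-difference using the two moves ``prepend $i_m$'' (legal because every block $\sigma_{i_m}(j)$ ends in $i_m$, so any non-initial occurrence of $\sigma_{i_m}(w)$ is preceded by $i_m$) and ``delete the trailing $i_m$''. The boundary issue you isolate is genuine but narrower than you present it: $\delta=\pm1$ needs only a prepend, which is always available (even for $w=\varepsilon$, since $i_m$ is a factor), while $\delta=+2$ (resp.\ $\delta=-2$) needs exactly one deletion, and only from the $v$-side (resp.\ the $u$-side); so the only cases your two moves miss are $\delta=+2$ with $v=\varepsilon$ and $\delta=-2$ with $u=\varepsilon$. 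The paper's own proof silently ignores these (its phrase ``$\sigma_{i_m}(u)$ and $\sigma_{i_m}(v)$ end with $i_m$'' presumes both words nonempty), and they never arise in its applications, where the empty word only ever plays the role whose image need not be truncated. Your proposed repair via maximal $i_m$-runs is heavier than needed and is left as a sketch; a one-line fix is to extend $u$ by the letter $a$ following one of its occurrences and apply the generic construction to the pair $(ua,a)$, i.e.\ take $\hat u=i_m\,\sigma_{i_m}(ua)$ and $\hat v=\sigma_{i_m}(a)$ with its trailing $i_m$ deleted (so $\hat v=a$ if $a\ne i_m$ and $\hat v=\varepsilon$ if $a=i_m$): this realizes $\delta=+2$ while leaving every non-$i_m$ difference equal to $|u|_j-|v|_j$, and the symmetric construction handles $\delta=-2$ with $u=\varepsilon$. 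Modulo completing that boundary case, your proof is correct and coincides with the paper's.
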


\begin{proof}
Clearly, $\sigma_{i_m}(u)$ and $\sigma_{i_m}(v)$ are factors of~$\omega^{(m)}$, with
$$
|\sigma_{i_m}(u)|_{i_m} - |\sigma_{i_m}(v)|_{i_m} = |u| - |v|, \ |\sigma_{i_m}(u)|_j - |\sigma_{i_m}(v)|_j = |u|_j - |v|_j\ \mbox{for}\ j \in \mathcal{A} \setminus \{i_m\}\,.
$$
Since $\sigma_{i_m}(u)$ and $\sigma_{i_m}(v)$ end with~$i_m$, and $i_m\, \sigma_{i_m}(u)$ and $i_m\, \sigma_{i_m}(v)$ are factors of~$\omega^{(m)}$, we obtain every $\delta \in \{\pm1, \pm2\}$ by removing $i_m$ at the end of $\sigma_{i_m}(u)$ and/or $\sigma_{i_m}(v)$ and/or appending $i_m$ in front of $\sigma_{i_m}(u)$ and/or $\sigma_{i_m}(v)$.
\end{proof}

\begin{proof}[Proof of Theorem~\ref{t:3}]
We first prove that $\omega^{(m)}$ contains factors $u, v$ whose difference of Parikh vectors $(|u|_1 - |v|_1, |u|_2 - |v|_2, |u|_3 - |v|_3)$ is $(1, 2, -2)$ or $(-1, 2, -2)$. 
Starting from such a pair $(u,v)$ and applying the substitution corresponding to any given word in $\{1,3\}^*\, 1\, 2^*\, 1\, 2\, \{1,2\}$, we will then exhibit an imbalance of~$3$ in~$\omega$.

By Theorem~\ref{theo:ar}, the sequence $(i_m)_{m\ge0}$ contains infinitely many occurrences of~$1$, $2$, and $3$ respectively.
Therefore, for any $m \ge 0$, $i_m i_{m+1} \cdots$ starts with a word in $1^*\, 2\, \mathcal{A}^*\, 2\, \mathcal{A}^*\, 3\, \mathcal{A}^*\, 3$ or $1^*\, 3\, \mathcal{A}^*\, 3\, \mathcal{A}^*\, 2\, \mathcal{A}^*\, 2$.

Assume first $i_m \cdots i_{m+3} = 2 2 3 3$.
Starting with the factor $u = 1$ of~$\omega^{(m+4)}$ and $v$ the empty word, and successively applying Lemma~\ref{l:subst}, we obtain the following  chain of vectors $(|\hat{u}|_1 - |\hat{v}|_1, |\hat{u}|_2 - |\hat{v}|_2, |\hat{u}|_3 - |\hat{v}|_3)$ for factors $\hat{u}, \hat{v}$ of~$\omega^{(n)}$, $m+4 \ge n \ge m$:
$$
\begin{pmatrix}1\\0\\0\end{pmatrix} \stackrel{\displaystyle\sigma_3}\longrightarrow \begin{pmatrix}1\\0\\-1\end{pmatrix} \stackrel{\displaystyle\sigma_3}\longrightarrow \begin{pmatrix}1\\0\\-2\end{pmatrix} \stackrel{\displaystyle\sigma_2}\longrightarrow \begin{pmatrix}1\\1\\-2\end{pmatrix} \stackrel{\displaystyle\sigma_2}\longrightarrow \begin{pmatrix}1\\2\\-2\end{pmatrix}.
$$

We can insert between the terms of this chain some substitutions that will not alter the difference vectors that have been created.  
Indeed, except for the last vector, any of these vectors is the image of itself by any substitution~$\sigma_j$, $j \in \mathcal{A}$, and some $\delta \in \{0, \pm1, \pm 2\}$ as in Lemma~\ref{l:subst}. 
More precisely, if $(u,v) $ are factors of~$\omega^{(n+1)}$ satisfying $(|u|_1 - |v|_1, |u|_2 - |v|_2, |u|_3 - |v|_3) = (1,0,-1)$, then whatever the value of~$i_n$, there exist factors $\hat{u}, \hat{v}$ of~$\omega^{(n)}$ with the same difference vector $(1,0,-1)$.
The same holds for the vectors $(1,0,-2)$ and $(1,1,-2)$, while the last vector $(1,2,-2)$ is stabilized only by $\sigma_1$ and~$\sigma_2$. 
Therefore, if $i_m i_{m+1} \cdots$ starts with a word in $1^*\, 2\, \mathcal{A}^*\, 2\, \mathcal{A}^*\, 3\, \mathcal{A}^*\, 3$, then $\omega^{(m)}$ contains factors $u, v$ with $(|u|_1 - |v|_1, |u|_2 - |v|_2, |u|_3 - |v|_3) = (1, 2, -2)$.

Symmetrically, if $i_m i_{m+1} \cdots$ starts with a word in $1^*\, 3\, \mathcal{A}^*\, 3\, \mathcal{A}^*\, 2\, \mathcal{A}^*\, 2$, then the chain
$$
\begin{pmatrix}-1\\0\\0\end{pmatrix} \stackrel{\displaystyle\sigma_2}\longrightarrow \begin{pmatrix}-1\\1\\0\end{pmatrix} \stackrel{\displaystyle\sigma_2}\longrightarrow \begin{pmatrix}-1\\2\\0\end{pmatrix} \stackrel{\displaystyle\sigma_3}\longrightarrow \begin{pmatrix}-1\\2\\-1\end{pmatrix} \stackrel{\displaystyle\sigma_3}\longrightarrow \begin{pmatrix}-1\\2\\-2\end{pmatrix}
$$
gives factors $u, v$ of~$\omega^{(m)}$ with $(|u|_1 - |v|_1, |u|_2 - |v|_2, |u|_3 - |v|_3) = (-1, 2, -2)$. This finishes our first step.

Starting with factors $u, v \in \omega^{(m)}$ such that $(|u|_1 - |v|_1, |u|_2 - |v|_2, |u|_3 - |v|_3) = (\pm1, 2, -2)$, we now use the chain
$$
\begin{pmatrix}\pm1\\2\\-2\end{pmatrix} \stackrel{\displaystyle\sigma_1}\longrightarrow \begin{pmatrix}1\\2\\-2\end{pmatrix} \stackrel{\displaystyle\sigma_2}\longrightarrow \begin{pmatrix}1\\3\\-2\end{pmatrix} \stackrel{\displaystyle\sigma_1}\longrightarrow \begin{pmatrix}0\\3\\-2\end{pmatrix} \stackrel{\displaystyle\sigma_1}\longrightarrow \begin{pmatrix}-1\\3\\-2\end{pmatrix}.
$$
Here, $(0, 3, -2)$ is fixed by $\sigma_2$ (with $\delta = 2$); $(-1, 3, -2)$ is fixed by $\sigma_1$ (with $\delta = -1$) and by $\sigma_3$ (with $\delta = -2$), i.e., this chain can be used for any word in $\{1,3\}^*\, 1\, 2^*\, 1\, 2\, 1$.
When $(|u|_1 - |v|_1, |u|_2 - |v|_2, |u|_3 - |v|_3) = (1, 2, -2)$, then the first transition is of course not needed and we can use the rest of the chain for any word in $\{1,3\}^*\, 1\, 2^*\, 1\, 2$.

We have thus shown that, if $i_0 i_1 i_2 \cdots$ starts with a word in $\{1,3\}^*\, 1\, 2^*\, 1\, 2\, 1$, then $\omega$ contains factors $u, v$ such that $(|u|_1 - |v|_1, |u|_2 - |v|_2, |u|_3 - |v|_3) = (1,3,-2)$, which implies that $\omega$ is not $2$-balanced.
If a prefix of $i_0 i_1 i_2 \cdots$ is in $\{1,3\}^*\, 1\, 2^*\, 1\, 2\, 2$, then we get the same result, as $i_0 i_1 i_2 \cdots$ starts now with a word in $\{1,3\}^*\, 1\, 2^*\, 1\, 2$ followed by an element of $2\, \mathcal{A}^*\, 2\, \mathcal{A}^*\, 3\, \mathcal{A}^*\, 3$.
\end{proof}

\begin{proof}[Proof of Proposition~\ref{prop:2}]
Let $\omega$ satisfy the conditions of the proposition.
Since the conditions of Theorem~\ref{t:2} are violated only for $m = 2$, the Arnoux-Rauzy words~$\omega^{(m)}$, $m \ge 2$, satisfy all conditions of Theorem~\ref{t:2} and are thus $2$-balanced.
By Theorem~\ref{t:3}, $\omega^{(1)}$~is not $2$-balanced.

The proof of the $2$-balancedness of~$\omega$ will be similar to the proofs of Theorems~\ref{t:1} and~\ref{t:2}: 
Supposing that $\omega$ contains a pair of factors that are $2$-unbalanced, we construct from this pair a $2$-unbalanced pair of factors of~$\omega^{(m)}$.
The only difference is that we require now $m \ge 2$. 

We first show that $\omega^{(1)}$ has no factors $u,v$ with $|u| = |v|$ and $|u|_1 - |v|_1 = 3$ or $|u|_3 - |v|_3= 3$.
To this end, recall that the $\mathcal{S}$-directive sequence of~$\omega^{(1)}$ starts with~$112$. 
The proof of Theorem~\ref{t:1} (see also the proof of Theorem~\ref{t:2}) shows that if $\omega^{(1)}$ had factors~$u,v$ with $|u| = |v|$ and $|u|_1 - |v|_1 = 3$, then the same would be true for~$\omega^{(2)}$, contradicting that $\omega^{(2)}$ is $2$-balanced.
We also see from the proof of Theorem~\ref{t:1} that the existence of factors~$u,v$ of~$\omega^{(1)}$ with $|u| = |v|$ and $|u|_3 - |v|_3 = 3$ contradicts that $\omega^{(m)}$ is $2$-balanced for all $m \in \{2,3,4\}$. 

Now suppose that $\omega$ is not $2$-balanced, i.e., there exist factors $u, v$ of~$\omega$ with $|u| = |v|$ and $|u|_j - |v|_j = 3$ for some $j \in \mathcal{A}$. 
Recall that $i_0 i_1 \cdots i_5 = 211213$.
 
Assume first $j = 3$. 
By Lemma~\ref{l:preimage}, we have $|u^{(2)}|_3 - |v^{(2)}|_3 = |u^{(1)}|_3 - |v^{(1)}|_3 = 3$, $|u^{(1)}| - |v^{(1)}| \le |u|_2 - |v|_ 2 + 2$, and $|u^{(2)}| - |v^{(2)}| \le |u^{(1)}|_1 - |v^{(1)}|_ 1 + 2 = |u|_1 - |v|_ 1 + 2$.
Since $|u| = |v|$ and $|u|_3 - |v|_3 = 3$, we have either $|u|_1 - |v|_ 1 \le -2$ or $|u|_2 - |v|_ 2 \le -2$, thus $|u^{(2)}| \le |v^{(2)}|$ or $|u^{(1)}| \le |v^{(1)}|$.
By Lemma~\ref{l:unbalanced1}, this contradicts that $\omega^{(2)}$ is $2$-balanced or that imbalances of size~$3$ do not occur in~$\omega^{(1)}$ for the letter~$3$.

Assume now $j = 2$. 
We deduce from Lemma~\ref{l:preimage} that $|u^{(1)}|_2 - |v^{(1)}|_2 \ge 4$, $|u^{(1)}| - |v^{(1)}| \ge 1$. 
In the last paragraph we have seen that $|u|_3 - |v|_3 \le -3$ is impossible.
This implies $|u|_1 - |v|_1 \le -1$, hence $|u^{(1)}|_1 - |v^{(1)}|_1 \le -1$ by Lemma~\ref{l:preimage}.
Now we obtain by Lemma~\ref{l:preimage} that $|u^{(2)}| - |v^{(2)}| \le 1$ and $|u^{(2)}|_2 - |v^{(2)}|_2 \ge 4$, which by Lemma~\ref{l:unbalanced1} contradicts the $2$-balancedness of~$\omega^{(2)}$.

Finally, assume $j = 1$. 
Then Lemma~\ref{l:preimage} provides $|u^{(1)}|_1 - |v^{(1)}|_1  =3$.
We cannot have $|u^{(1)}| \le |v^{(1)}|$, as this contradicts by Lemma~\ref{l:unbalanced1} that imbalances of size~$3$ do not occur in~$\omega^{(1)}$ for the letter~$1$.
Therefore, we must have $|u|_2 - |v|_ 2 \ge -1$ by Lemma~\ref{l:preimage}.
As $|u|_3 - |v|_3 \le -3$ is impossible, we get 
$$
|u|_1 - |v|_ 1  = 3\,, \qquad |u|_2 - |v|_ 2  = -1\,, \qquad |u|_3 - |v|_ 3  = -2\,,
$$
Using that $|u^{(1)}| > |v^{(1)}|$, Lemma~\ref{l:preimage} gives 
$$
|u^{(1)}|_1 - |v^{(1)}|_ 1  = 3\,, \qquad |u^{(1)}|_2 - |v^{(1)}|_ 2 = 0\,, \qquad |u^{(1)}|_3 - |v^{(1)}|_ 3  = -2\,.
$$
As $i_1 = i_2 = 1$, $i_3 = 2$, further applications of Lemma~\ref{l:preimage} give
\begin{align*}
& |u^{(4)}|_1 -  |v^{(4)}|_1 = |u^{(3)}|_1 -  |v^{(3)}|_1 \ge 2\, \big(|u^{(2)}|_1 -  |v^{(2)}|_1\big) - \big(|u^{(2)}| - |v^{(2)}|\big) - 2 \\ 
& = |u^{(2)}|_1 -  |v^{(2)}|_1 + |u^{(1)}|_1-  |v^{(1)}|_1 - \big(|u^{(1)}| - |v^{(1)}|\big) - 2 = |u^{(2)}|_1 -  |v^{(2)}|_1 \ge 3
\end{align*}
and, with $i_4 = 1$, 
$$
|u^{(5)}| -  |v^{(5)}| \ge 1\,, \qquad |u^{(5)}|_3 -  |v^{(5)}|_3 = |u|_3 -  |v|_3 = -2\,.
$$
Since $i_5 = 3$, we can replace $n$ by~$4$ and $k$ by~$3$ in~(\ref{e:dif})  and the subsequent lines, and get a contradiction to the $2$-balancedness of~$\omega^{(6)}$.
\end{proof}

\section*{Acknowledgments}
We warmly thank S\'ebastien Ferenczi and Pierre Arnoux for useful discussions on the subject.

\end{document}